\numberwithin{equation}{section}
\newtheorem*{theorem*}{Theorem}
\newtheorem{theorem}{Theorem}[section]
\newtheorem{lemma}[theorem]{Lemma}
\newtheorem{corollary}[theorem]{Corollary}
\theoremstyle{definition}
\newtheorem*{remark}{Remark}
\newcommand{\beas}{\begin{eqnarray*}}
\newcommand{\eeas}{\end{eqnarray*}}
\newcommand{\bes} {\begin{equation*}}
\newcommand{\ees} {\end{equation*}}
\newcommand{\be} {\begin{equation}}
\newcommand{\ee} {\end{equation}}
\newcommand{\bea} {\begin{eqnarray}}
\newcommand{\eea} {\end{eqnarray}}
\newcommand{\de}{\delta}
\newcommand{\eps}{\varepsilon}
\newcommand{\Rd}{\mathbb R^d}
\newcommand{\R}{\mathbb R}
\newcommand{\vol}{\text{vol}}
\newcommand{\inte}{\text{int}\,}
\begin{document}
\title[]{A (Hilbert) geometric algorithm for approximating the halfspace depth of a point in a convex body}
\author[]{Purvi Gupta}
\address{Department of Mathematics, Indian Institute of Science, Bangalore}
\email{purvigupta@iisc.ac.in}
\author[]{Anant Narayanan}
\address{Department of Mathematics, Indian Institute of Science, Bangalore}
\email{mkanant17@gmail.com}

\begin{abstract}
Halfspace (or Tukey) depth is a fundamental and robust measure of centrality of data points in multivariate datasets. Computing the depth of a point with respect to the uniform distribution on an open convex body in $\Rd$ is a natural algorithmic problem. While the coarser task of testing membership in convex bodies has been extensively studied, the refined problem of evaluating depth has received comparatively little attention in the literature. 

In this work, we present an algorithm for approximating the halfspace depth of a point in an open convex body $K \subset \Rd.$ To the best of our knowledge, this is the first deterministic algorithm for this problem. As part of our approach, we design an algorithm for answering approximate membership queries for the depth-trimmed regions of $K$ (i.e., the superlevel sets of the depth function). Our data structure is inspired by recent work of Abdelkader and Mount [SOSA 2024], wherein approximate membership queries for $K$ are answered using geometric structures derived from the Hilbert metric on $K.$ A key component underlying our data structure is a novel quantitative comparison between the depth-trimmed regions and the Hilbert metric balls of $K$.

Lastly, to highlight the computational expense of the problem, we present an algorithm for determining the exact depth of a point in an open planar convex polygon presented as the intersection of finitely many halfplanes.
\end{abstract}

\maketitle
\section{Introduction}
\subsection{Testing membership in convex bodies} Given a convex polytope $K \subset \mathbb{R}^d$, the polytope membership query problem asks for an efficient algorithm to determine whether a given query point lies in $K$. While there exist linear-size data structures that answer such queries in logarithmic time when the dimension is at most $3$, in higher dimensions, the best known near-linear-size data structures have a query time of roughly $O(n^{1 - 1/\lfloor d/2 \rfloor}),$ where $n$ is the number of facets of $K$; see \cite{Ma92}. Since this is too costly for practical purposes, one typically considers approximate membership queries instead.

 Let $\eps>0$ be a fixed parameter. Given a convex body $K\subset\R^d$, let $K\oplus \eps=\{x\in\R^d:\text{dist}(x,K)<\eps\}$ and $K\ominus\eps=\{x\in K:\text{dist}(x,bK)>\eps\}$. Given a query point $q\in\R^d$, the $\eps$-(outer) approximate polytope membership query ($\eps$-APM) returns ``Yes'' if $q\in K$, and ``No'' if $q\notin K\oplus\eps\cdot\text{diam}(K)$. On the other hand, the $\eps$-(inner) approximate polytope membership query ($\eps$-APM) returns ``Yes'' if $q\in K\ominus\eps\cdot\text{diam}(K)$, and  ``No'' if $q\notin K$. We refer to either of these as an $\eps$-APM query. There are many efficient algorithms in the literature for answering APM queries; see \cite{Du74, BFP82, AFM18} and references therein for some of the earlier results. These typically rely on classical partitioning schemes such as grids and quadtrees.

 An optimal algorithm for APM queries with storage space $O\left(1/\eps^{(d-1)/2}\right)$ and query time $O(\ln(1/\eps))$ was first given by Arya, da Fonseca, and Mount in \cite{AFM17b}. Subsequently, two more algorithms achieving the same bounds were proposed by Abdelkader and Mount in \cite{AbMo18} and \cite{AbMo24}. The optimality of these three algorithms, compared to the earlier ones, is largely due to the use of shape-sensitive geometric structures such as Macbeath regions and the Hilbert metric.  

In this paper, we study a refined version of the membership problem: given an open convex body $K\subset\R^d$, and a query point $q\in\R^d$, how ``deep'' is $q$ within $K$? We describe a ray-shooting approach, similar to that of Abdelkader and Mount in \cite{AbMo24}, to (approximately) answer such a query. Our analysis uses the Hilbert metric in a crucial way. We view this work as part of a growing body of literature on the application of the Hilbert metric to algorithm design; see \cite{GeMo23, AbMo24, BDGMSYM, NiSu19, NiSu23}.

\subsection{The halfspace depth of a point in a convex body} In statistics, the halfspace (or Tukey) depth of a point is a measure of how deeply it lies within a fixed point cloud in $\Rd.$ The deepest points serve as higher-dimensional analogues of the one-dimensional median. See \cite{NSW19} for a discussion on the robustness of halfspace depth as a notion of centrality. The computational problem of designing an efficient algorithm to compute the (exact or approximate) halfspace depth of a point with respect to a given data set of
$n$ points has been extensively studied; see \cite{RuRo96, MRRSS03, Ch04, Dy16, DyMo16, LMM19}, for instance.

More generally, the halfspace depth of a point $q \in \mathbb{R}^d$ with respect to a probability distribution $\mu$ on $\mathbb{R}^d$ is defined as
\[
D_\mu(q) = \inf\left\{\mathbb{P}(\langle v, X - q \rangle \geq 0) : v \in \mathbb{R}^d,\, \|v\| = 1 \right\},
\]
where $X$ is a random variable with distribution $\mu$, and $\langle \cdot, \cdot \rangle$ denotes the standard inner product on $\mathbb{R}^d$. In particular, if $\mu = \mu_K$ is the uniform probability measure supported on an open convex body $K \subset \mathbb{R}^d$, then
\[
D_K(q) = D_{\mu_K}(q) = \inf\left\{ \frac{|\{ z \in K : \langle v, z - q \rangle \geq 0 \}|}{|K|} : v \in \mathbb{R}^d,\, \|v\| = 1 \right\},
\]
which is the minimum possible normalized volume of a cap of $K$ cut off by a hyperplane passing through $q.$

A natural computational problem is to design an algorithm that computes the halfspace depth of a query point $q \in \Rd$ with respect to a given open convex body $K \subset \Rd$. We refer to this as the \emph{exact depth query} for $K$. Observe that all points in $K$ have positive depth, while points outside $K$ have depth zero. Hence, the exact depth query for $K$ is a refinement of the exact membership query for $K$, and thus, at least as computationally expensive.

As far as we are aware, exact depth queries for convex polytopes have not been addressed in the literature. In Section~\ref{S:EDQ}, we show that any open planar convex polygon $K$, given as the intersection of $n$ halfplanes, can be preprocessed in $O(n \log n)$ time into a data structure of size $O(n)$ that answers the exact depth query for $K$ in $O(n^3)$ time for any query point $q\in \R^2$. In higher dimensions, this approach yields significantly worse bounds, making it prudent to allow for errors.

\subsection{The approximate depth query problem} The focus of this paper is the following problem. Given an open convex body $K \subset \mathbb{R}^d,$ and $\eps \in (0, 1/3)$, design a data structure and an algorithm that, for any query point $q \in \mathbb{R}^d$, returns a value $\mathfrak{D}(q)$ satisfying
 \[ \begin{cases}
\mathfrak{D}(q) = \varepsilon, \hspace{2mm}\text{if } {D_K(q) \leq \varepsilon}, \\
(1 - \varepsilon)\, D_K(q) \leq \mathfrak{D}(q) \leq (1 - \varepsilon)^{-1}D_K(q), \hspace{2mm}\text{otherwise.}
\end{cases}
\]
We refer to this as the \emph{$\eps$-approximate depth query} ($\eps$-ADQ) for $K.$ Note that we do not impose a multiplicative guarantee for all values of $D_K(q)$, since this would require solving the exact membership query for $K$ when $D_K(q) = 0$, which is computationally expensive. Building on recent geometric developments, we establish the following result.

\begin{theorem}\label{T:ADQ}Given an open convex body $K\subset\R^d$, and $\eps\in(0, 1/3),$ there is a data structure that answers the $\eps$-ADQ for $K$ with
    \[\text{Storage space}: O\left(\frac{1}{\eps^{\frac{5d+1}{2}}}\right)\hspace{5mm}\text{Query time}: O\left(\frac{1}{\eps^2}\left(\ln\frac{1}{\eps}\right)^2\right).\]
The constant factors in the storage space and query time depend only on $d$ (not on $K$ or $\eps$).
\end{theorem}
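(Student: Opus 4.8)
The plan is to reduce the $\eps$-ADQ for $K$ to a batch of one-sided approximate membership queries, one for each of a geometrically spaced family of depth-trimmed regions, and then to build those membership structures out of the Hilbert-geometry tools of Abdelkader--Mount \cite{AbMo24}.

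\emph{Step 1 (discretizing the depth).} Since a hyperplane through any point cuts $K$ into two caps, $D_K\le\tfrac12$ everywhere, so it suffices to locate $D_K(q)$ among the thresholds $\tau_j=\eps(1-\eps)^{-j}$, $j=0,\dots,J$, with $\tau_J$ the first one exceeding $\tfrac12$ and $J=O(\eps^{-1}\ln\eps^{-1})$. Write $D_{\ge\tau}(K):=\{x\in\R^d:D_K(x)\ge\tau\}$ for the $\tau$-trimmed region; these are nested convex bodies, and $D_{\ge\tau}(K)$ coincides, up to its boundary, with the convex floating body of $K$ at level $\tau$ (the intersection of all halfspaces cutting off at most a $\tau$-fraction of $|K|$). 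Given a subroutine that, for each $j$, answers ``Yes'' when $q\in D_{\ge\tau_j}(K)$ and ``No'' when $q$ lies outside a controlled outer approximation of $D_{\ge\tau_j}(K)$, one returns the threshold attached to the deepest trimmed region that reports ``Yes'' (and $\eps$ if none does): the factor-$(1-\eps)$ spacing simultaneously absorbs the requested $(1-\eps)^{\pm1}$ slack and the one-sided error, and inner-nesting the approximating bodies keeps the answers monotone in $j$, so the right index is found by a scan or a binary search.

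\emph{Step 2 (comparing trimmed regions with Hilbert balls --- the crux).} The geometric heart of the argument is a \emph{uniform} two-sided comparison of the trimmed regions with the metric balls $B_{d_K}(c,r)$ of the Hilbert metric $d_K$ on $K$ (which are themselves convex bodies). I expect a statement of the form
\[
B_{d_K}\big(c,\ \alpha_d\ln(1/\tau)-C_d\big)\ \subseteq\ D_{\ge\tau}(K)\ \subseteq\ B_{d_K}\big(c,\ \beta_d\ln(1/\tau)+C_d\big),
\]
for a suitable center $c=c(K)$ (e.g.\ a point of maximal depth) and constants $0<\alpha_d\le\beta_d$, $C_d$ depending only on $d$. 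The lower inclusion is a Grünbaum-type estimate: a $d_K$-ball of radius $r$ about $c$ lies inside the Euclidean scaling of $K$ about $c$ by factor $1-e^{-2r}$, and a point at such a relative position has, via Grünbaum's inequality applied to caps of the scaled copy, depth bounded below in terms of $e^{-2r}$; choosing $r$ so this bound is $\ge\tau$ gives containment. The upper inclusion is the matching distortion bound: for $x$ far from $c$ in $d_K$ one exhibits an explicit bad hyperplane through $x$ --- essentially one separating off the part of $K$ beyond $x$ as seen from $c$ --- and lower-bounds the volume it cuts off in terms of the cross-ratio defining $d_K(c,x)$. The one-dimensional model, where $D_{\ge\tau}\big((0,1)\big)$ is \emph{exactly} the $d_K$-ball of radius $\tfrac12\ln\tfrac{1-\tau}{\tau}$ about the median, pins down the normalizations, while the simplex (moving toward a vertex versus toward a facet center costs different amounts of depth per unit Hilbert distance) shows that $\alpha_d<\beta_d$ in general, so the comparison genuinely cannot be made exact. \textbf{Establishing this uniformly is the main obstacle}: $d_K$ degenerates near $bK$, floating bodies can be arbitrarily elongated, and as $\tau$ approaches the maximal depth $D_{\ge\tau}$ collapses toward the (possibly lower-dimensional) deepest set, so one must rule out any hidden dependence of $\alpha_d,\beta_d,C_d$ on $K$ or $\tau$ --- presumably by localizing, with Macbeath regions serving as the common local model of the two families.

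\emph{Step 3 (assembling the data structure and accounting).} With Step 2 in hand, $D_{\ge\tau_j}(K)$ is sandwiched between two $d_K$-balls of radius $O\big(d\ln(1/\tau_j)\big)=O\big(d\ln(1/\eps)\big)$ about $c$; in particular it is a body of bounded Hilbert diameter on which $d_K$ distorts Euclidean distance by only a fixed polynomial-in-$1/\eps$ factor. This is exactly the regime for which the ray-shooting/Hilbert-ball membership structure of \cite{AbMo24} is built: I instantiate one such structure per threshold $\tau_j$, recovering the supporting hyperplanes of $D_{\ge\tau_j}(K)$ (the $\tau_j$-cutting halfspaces of $K$, whose offset in each direction is a one-dimensional integral of cross-sectional volumes of $K$) by ray-shooting toward $bK$, and setting the accuracy parameter to a fixed polynomial in $\eps$ small enough that neither the $\alpha_d<\beta_d$ gap nor the residual distortion can flip an answer at the resolution the $(1-\eps)$-grid demands. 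Summing the per-structure storage over the $O(\eps^{-1}\ln\eps^{-1})$ thresholds yields the claimed $O\big(\eps^{-(5d+1)/2}\big)$, and a query --- passing over the thresholds, each one answered by a traversal of depth $O(\ln\eps^{-1})$ with $O(\eps^{-1}\ln\eps^{-1})$ work to locate and test the relevant ball --- yields $O\big(\eps^{-2}(\ln\eps^{-1})^2\big)$; this bookkeeping is routine given the bounds of \cite{AbMo24}, though the accuracy refinement forced by Step 2 is what lifts the exponent above the optimal APM value. Finally, the clause ``$\mathfrak D(q)=\eps$ when $D_K(q)\le\eps$'' needs no separate treatment: outputting $\eps$ whenever $q$ is reported outside every $D_{\ge\tau_j}$ is consistent with the specification whether or not $q\in K$, so the exact membership problem for $K$ --- the genuinely expensive part --- is never invoked.
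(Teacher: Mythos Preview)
Your three-step architecture matches the paper's: discretize the depth on a $(1-\eps)$-geometric grid, build an approximate-membership structure for each trimmed region, and binary-search over the levels. You have also correctly isolated the comparison $B_{d_K}(c,\alpha_d\ln(1/\tau)-C_d)\subset K_\tau\subset B_{d_K}(c,\beta_d\ln(1/\tau)+C_d)$ as the central new geometric ingredient (this is the paper's Theorem~\ref{T:FB-HB}/Corollary~\ref{C:bound}), and your sketch via dilates and cap-volume estimates is in the right spirit, though the paper routes the bounds through a lemma on the centroid of a cap's base (Lemma~\ref{L:width}) rather than through Gr\"unbaum, and for the upper inclusion one needs to \emph{upper}-bound the small cap, not lower-bound it.

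Where the plan has a genuine gap is the \emph{role} you assign to that comparison. You write that the accuracy must be chosen ``small enough that neither the $\alpha_d<\beta_d$ gap \dots\ can flip an answer''; this reads as though the sandwich is meant to help \emph{answer} the membership query. It cannot: the inner and outer Hilbert balls differ in radius by $\Theta(\ln(1/\tau))$, so $K_\tau$ is nowhere close to a Hilbert ball and no polynomial-in-$\eps$ accuracy will bridge that. In the paper the comparison is used \emph{only in the analysis}---to bound the Hilbert diameter of $K_\tau$ (hence the number of ellipsoids a query ray traverses) and its Busemann volume (hence the size of the Delone set). The \emph{correctness} of the per-level AMQ is a separate, direct estimate (Lemma~\ref{L:existenceDS}): a Hilbert ball of radius $\approx\eps\tau|K|$ about any point of $K_\tau$ already lies in $K_{(1-\eps)\tau}$, and this is what forces the Macbeath scale to be of order $\eps\tau$, not merely polynomial in $\eps$. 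Your sketch never establishes this containment, and invoking \cite{AbMo24} as a black box does not supply it, since that structure's guarantee is relative to a Euclidean $\eps'$-thickening of the body, whereas here the ``No'' region is $K_{(1-\eps)\tau}^c$, an annulus whose Euclidean width varies wildly along $bK_\tau$. Once the scale $\lambda\asymp\eps\tau$ is in place, the per-level storage $O(\eps^{-d}\tau^{-(3d-1)/2})$ and query time $O((\eps\tau)^{-1}\ln(1/\tau))$ follow from packing/diameter bounds, and the geometric sum over $\tau_j\in[\eps,1/2]$ gives the stated totals; your own accounting (``traversal of depth $O(\ln\eps^{-1})$ with $O(\eps^{-1}\ln\eps^{-1})$ work'') undercounts the per-level query cost by a factor of $1/\eps$ and only reaches the right answer by compensating with a linear scan over all levels.
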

To the best of our knowledge, this is the first deterministic algorithm in the literature that answers approximate depth queries for open convex bodies. Our focus is on the existence of this data structure; we do not address preprocessing in this paper. Our storage space and query time bounds hold irrespective of $K$’s representation. The sharpness of our bounds remains an open question. Our approach involves testing \emph{approximate} membership in suitable superlevel sets of the depth function of $K$.

\subsection{Testing membership in depth-trimmed regions}\label{SSS:AMQ} Given an open convex body $K \subset \mathbb{R}^d$ and a parameter $\de \in (0, 1/2],$ the $\de$-depth-trimmed region of $K$ is defined as  
\be\label{E:DTR}
K_\de = \left\{ q \in \Rd : D_K(q) \geq \de \right\}.
\ee
These regions are also referred to as convex floating bodies in the literature. Specifically, the $\delta$-depth-trimmed region of $K$ coincides with the $\de\,|K|$-convex floating body of $K$. Convex floating bodies were independently introduced by Bárány and Larman~\cite{BaLa88} and by Schütt and Werner~\cite{ScWe90}, and have since attracted significant attention in both theory and applications; see~\cite{NSW19} and~\cite{We22}, for instance.

The approximate depth query problem is addressed by directly relating it to the following approximate membership problem. Let $\delta\in (0, 1/2]$ be sufficiently small such that $K_\de\neq\emptyset$, and let $\eps\in(0,1)$. Given a query point $q\in\Rd$, define the {\em $\eps$-approximate membership query for the $\de$-depth-trimmed region} ($\eps$-AMQ for the $\delta$-DTR) of $K$ as a query that returns
    \beas
        \begin{cases}
        \text{Yes, if }q\in K_\de,\\
        \text{No, if }q\notin K_{(1-\eps)\,\de},
        \end{cases}
    \eeas
and either, otherwise.

Motivated by the work of Abdelkader and Mount~\cite{AbMo24}, we design a data structure and an algorithm for answering  the $\eps$-AMQ for the $\de$-DTR of $K$ with storage space $O\left(\dfrac{1}{\eps^{d}\,\de^{\frac{3d-1}{2}}} \right)$ and query time $O\left(\dfrac{1}{\eps\,\de}\ln{\dfrac{1}{\de}}\right)$; see Theorem~\ref{T:AMQ}. Combined with a binary search procedure, this yields our main result, stated as Theorem~\ref{T:ADQ}. 

To our knowledge, the only related work is due to Anderson and Rademacher \cite{AnRa20}. For any fixed centrally symmetric open convex body $K \subset \mathbb{R}^d$, they present an algorithm which, given a sample of $N$ i.i.d.\ points from $K$, answers an $\eps$-weak membership query for $K_\delta$ with probability at least $1 - \tau$ over the sample, using time and sample complexity $\mathrm{poly}(d, 1/\delta, 1/\eps, \ln(1/\tau))$. For a query point $q \in \mathbb{R}^d$, the algorithm must either assert that $q \in K_\delta \oplus \eps$ or that $q \notin K_\delta \ominus \eps$.

\subsection{Our techniques and the role of the Hilbert metric} The bulk of the effort lies in the design and analysis of a data structure and an algorithm for answering the $\eps$-AMQ for the $\de$-DTR of $K$. The data structure primarily consists of the intersection graph $G$ of a collection of \emph{special} ellipsoids contained in $K$. The centers of these ellipsoids form a Delone set in $K_\delta$, which helps control both the number of vertices of $G$ and their maximum degree, and thereby the storage space of the data structure. The algorithm adopts a ray-shooting approach, where the query time is effectively determined by the number of suitably chosen ellipsoids traversed by a ray shot from a prescribed root towards the query point. This is motivated by a similar approach by Abdelkader and Mount~\cite{AbMo24} for answering APM queries.

The analysis of the algorithm relies on properties of the Hilbert geometry of $K$. The Hilbert metric is a complete, projectively invariant metric on $K$, with the key feature that straight lines (in the Euclidean sense) serve as shortest paths. It is known that in the Hilbert geometry of $K$, the Busemann volume (i.e., the $d$-dimensional Hausdorff measure induced by the Hilbert metric) of metric balls grows at most exponentially and at least polynomially (of degree $d$) with the radius. This geometry naturally arises in our context because the aforementioned \emph{special} ellipsoids closely approximate Hilbert metric balls. Since the centers of these ellipsoids form a Delone set in $K_\de$, they give rise to a disjoint collection of Hilbert metric balls of fixed radius, allowing us to bound the number of vertices of $G$---and hence, the storage space of the data structure---via an upper bound on the Busemann volume of $K_\de$.
The Delone set property also yields a lower bound on the Hilbert distance covered by the (query) ray as it passes through each suitably chosen ellipsoid on its path from the root to the query point. Combined with an upper bound on the Hilbert diameter of $K_\delta$, this gives an upper bound on the number of ellipsoids traversed---and hence, the query time. Upper bounds on the Busemann volume and Hilbert diameter of $K_\delta$ follow from the final component of our analysis: a novel quantitative comparison between the depth-trimmed regions and the Hilbert metric balls of $K$; see Theorem~\ref{T:FB-HB} and Corollary~\ref{C:bound}. These results may be of independent theoretical interest. 

\subsection{Organization of the paper} Some mathematical preliminaries are collected in Section~\ref{SS:prelim}. While most results therein are standard, Theorem~\ref{T:FB-HB} and Corollary~\ref{C:bound} are new contributions. In Section~\ref{S:ADQ}, we show that, modulo an algorithm for answering the $\eps$-AMQ for the $\de$-DTR of $K$, a binary search procedure can be used to answer the $\eps$-ADQ for $K,$ thus proving Theorem~\ref{T:ADQ}. An algorithm for the $\eps$-AMQ is presented in Section~\ref{SS:DOA}. In Section~\ref{S:EDQ}, we describe an algorithm for answering exact depth queries for any open planar convex polygon given as the intersection of $n$ halfplanes. Some proofs and remarks from Section~\ref{SS:prelim} are deferred to the appendix.

\subsection{Acknowledgements} 
We are grateful to Rahul Saladi for sharing his insights on the subject, which vastly improved our understanding of the existing literature. We are also very grateful to David M. Mount for patiently answering all our questions regarding the data structure constructed in \cite{AbMo24}.

P. Gupta is supported by the SERB grants MATRICS (MTR/2023/000393) and WERG (WEA/2023/000017). She would also like to acknowledge support from the ICTP through the Associates Programme (2023-2028). A. Narayanan is supported by a scholarship from the Indian Institute of Science. Both the authors are supported by the DST-FIST programme (grant no. DST FIST-2021 [TPN-700661]). 

\section{Mathematical preparation}\label{SS:prelim}
We begin this section by introducing some notation, followed by a collection of preliminary results and observations. We work in the $d$-dimensional Euclidean space $\mathbb{R}^d$, equipped with the standard Euclidean norm $\|\cdot\|$. Throughout this paper, $o$ denotes the origin in $\mathbb{R}^d$, and $\mathbb{B}(x, r)$ denotes the open Euclidean ball of radius $r > 0$ centered at $x \in \mathbb{R}^d$.  The interior and boundary of a set $S \subset \Rd$ are denoted by $\inte S$ and $bS$, respectively. The Lebesgue volume of a measurable set $S \subset \mathbb{R}^d$ is denoted by $|S|$; in particular, $\omega_d$ denotes $|\mathbb{B}(o,1)|$. We write $\#A$ to denote the cardinality of a finite set $A.$

We use standard asymptotic notation throughout to absorb constant factors (depending only on the dimension 
$d$, unless stated otherwise). For instance, we say that a real-valued function $f$ of $x\in A\subset\R$ is $O(x)$ if $f(x)\leq cx$ for all $x\in A,$ where $c>0$ is a constant. We use $\log_{\,b}$ to denote the logarithm to base $b$, and write ``$\ln$'' for the natural logarithm (base $e$).  
\begin{figure}[h]
\begin{overpic}[grid=false,tics=10,scale=0.6]{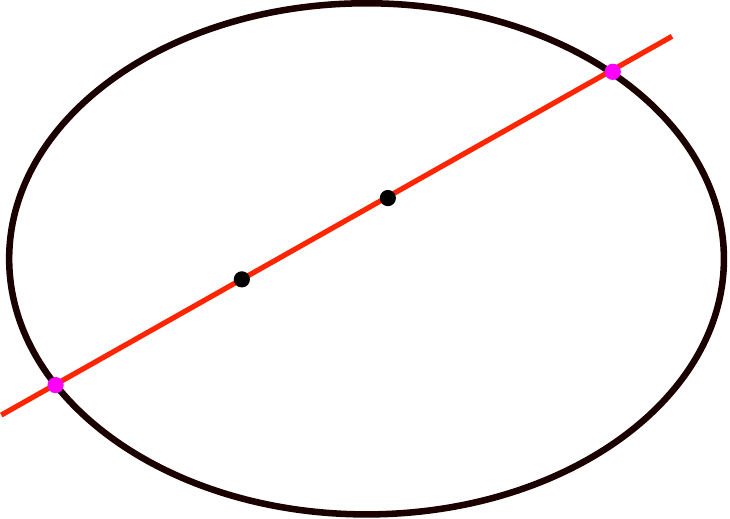}
\put(35,28){$p$}
    \put(55,40){$q$}
    \put(89,59){$b$}
     \put(5,12){$a$}
\end{overpic}
    \caption{Hilbert metric on $K.$}
    \label{fig:mesh1}
\end{figure}

Let $K\subset\Rd$ be an open convex body, i.e., a nonempty, bounded, and open convex set. For $p,q\in K$, let $a,b\in bK$ denote the points where the line joining $p$ and $q$ meets $bK$ so that $a,p,q,$ and $b$ are in consecutive order. Then, the {\em Hilbert metric} on $K$ is given by 
\bes
    d_K(p,q)=\frac{1}{2}\ln \frac{\Vert q-a\Vert \Vert p-b\Vert}{\Vert p-a\Vert\Vert q-b\Vert}.
\ees
We introduce some additional notation and definitions.
\begin{itemize}
\item $\vol_K$ denotes the Busemann volume, i.e., the $d$-dimensional Hausdorff measure associated with the metric space $(K, d_K).$
    \item  $\lambda K = \{ \lambda y : y \in K \}$ if $\lambda > 0$, and $\lambda K = \emptyset$ if $\lambda \leq 0$; defined if $o\in K.$
    \item $K$ is said to be in canonical form if its inner John ellipsoid (i.e., the open ellipsoid of maximal volume contained in $K$) is $\mathbb{B}(o, 1/(2d)).$
  \end{itemize}
\noindent For $x\in K$ and $\lambda\geq0,$
  \begin{itemize}
   \item $B_K(x, r) = \{ y \in K : d_K(x, y) < r \}$ if $r > 0$, and $B_K(x, r) = \emptyset$ if $r \leq 0.$
      \item $\text{ray}(x) =\| x-p\|$, where $p\in bK$ denotes the point where the ray from $o$ to $x$ meets $bK;$ defined if $o\in K.$
    \item ${\delta(x)}$ denotes the Euclidean distance of $x$ from $bK.$ 
   \item 
   $M^\lambda(x)=x+\lambda\left((K-x)\cap (x-K)\right)$ is the $\lambda$-Macbeath region at $x.$
       \item $E^\lambda(x)$ denotes the $\lambda$-Macbeath ellipsoid at $x$, i.e., the inner John ellipsoid of $M^\lambda(x).$ 
   
\end{itemize}
With the notation and definitions in place, we begin by noting some useful properties of the depth-trimmed regions of $K$ as defined in~\eqref{E:DTR}.

\begin{lemma}\label{L:FB} Let $K\subset\Rd$ be an open convex body, and let $\de\in(0, 1/2].$ Then  the following statements hold.
\begin{itemize}
    \item [$(a)$] For any affine transformation $A:\Rd\rightarrow\Rd$, $A(K_\de)=(AK)_{\de}$.
    \item [$(b)$] Through every point of $bK_\de$, there is at least one supporting hyperplane $H$ of $K_\de$ that cuts off a set of volume ${\de\,|K|}$ from $K$. Moreover, the hyperplane $H$ touches $K_\de$ at exactly one point, which is the centroid of $H\cap K$.
    \item [$(c)$] Each nonempty $K_\de$ is a strictly convex compact set.
\end{itemize}
\end{lemma}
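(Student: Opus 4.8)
My plan is to prove the three assertions in the order (a), then the convexity and compactness parts of (c), then (b), and finally the strict convexity part of (c) — this order is essentially forced, since (b) uses that $K_\de$ is closed and convex, and the strict convexity in (c) uses the uniqueness statement from (b). Part (a) is immediate once one observes that halfspace depth is invariant under invertible affine maps (which is all ``affine transformation'' can mean here, lest $AK$ fail to be a convex body): such a map carries hyperplanes through $q$ to hyperplanes through $Aq$, preserves the two open halfspaces, and scales every $d$-volume by the common factor $|\det A|$, whence $D_{AK}(Aq)=D_K(q)$ for all $q$, and so $A(K_\de)=(AK)_\de$. For the convexity and compactness of (c), I would set $C_v(q)=\{z\in K:\langle z-q,v\rangle\ge 0\}$ and note that $K_\de=\bigcap_{\|v\|=1}\{q:|C_v(q)|\ge\de|K|\}$; since $|C_v(q)|$ depends only on $\langle q,v\rangle$ and is nonincreasing in it, each set in this intersection is a closed halfspace, so $K_\de$ is closed and convex. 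It is bounded because any $q\notin K$ admits a halfspace containing $K$ whose complementary cap of $K$ has volume $0$ — a strict separating hyperplane if $q\notin\overline K$, a supporting hyperplane of $K$ if $q\in bK$ — so $D_K(q)=0<\de$; hence $K_\de\subset K$ is compact.

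For (b), fix $q\in bK_\de$. I would first invoke two standard facts about the uniform measure on a convex body: it charges no hyperplane, so $(q,v)\mapsto|C_v(q)|$ is jointly continuous, hence $D_K$ is continuous; consequently $q\in bK_\de$ forces $D_K(q)=\de$ (otherwise $D_K>\de$ on a neighbourhood of $q$, putting $q$ in $\inte K_\de$), and, since $D_K(q)>0$, the boundedness argument above also gives $q\in\inte K$. By compactness of the unit sphere the infimum defining $D_K(q)$ is attained at some $v_0$; set $H=\{z:\langle z-q,v_0\rangle=0\}$, so $|C_{v_0}(q)|=\de|K|$. To see that $H$ supports $K_\de$: if some $p\in K_\de$ had $\langle p-q,v_0\rangle>0$, the hyperplane through $p$ with normal $v_0$ would cut off a cap of $K$ that is a proper subset of $C_{v_0}(q)$, with the intervening slab of positive volume (as $q\in\inte K$), hence a cap of volume $<\de|K|$; this gives $D_K(p)<\de$, a contradiction. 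So $K_\de\subset\{z:\langle z-q,v_0\rangle\le 0\}$ and, since $q\in H$, $H$ supports $K_\de$ at $q$.

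It remains to identify $q$ with the centroid $c$ of $H\cap K$ and to show $H\cap K_\de=\{q\}$. If $c\ne q$, put $w=(c-q)/\|c-q\|$, a unit vector orthogonal to $v_0$ (both points lie in $H$), and rotate $H$ about the $(d-2)$-flat through $q$ normal to $w$, i.e.\ consider $v_t=(\cos t)v_0+(\sin t)w$; the standard formula for the rate of change of a cap volume under rotation of the cutting hyperplane gives $\tfrac{d}{dt}\big|_{t=0}|C_{v_t}(q)|=\int_{H\cap K}\langle z-q,w\rangle\,d\mathcal H^{d-1}=\langle c-q,w\rangle\,\mathcal H^{d-1}(H\cap K)=\|c-q\|\,\mathcal H^{d-1}(H\cap K)>0$, so $|C_{v_t}(q)|<\de|K|$ for small $t<0$, contradicting minimality of $v_0$; hence $q=c$. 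Running the same computation based at any other point $q'\in H\cap K_\de$ (which also lies in $\inte K$, with $H$ through it cutting volume $\de|K|$): since the centroid of $H\cap K$ is $q\ne q'$, rotating $H$ about $q'$ in the direction of $q-q'$ decreases the cap volume below $\de|K|$, giving $D_K(q')<\de$, impossible; thus $H\cap K_\de=\{q\}$. Finally, for strict convexity: if $[p_1,p_2]\subset bK_\de$ were nondegenerate with relative interior point $q$, then every supporting hyperplane of $K_\de$ at $q$ would contain the whole segment (else, writing $q$ as a proper convex combination of $p_1,p_2$, one endpoint lands strictly outside the supporting halfspace, contradicting its membership in $K_\de$) — but this contradicts that the hyperplane $H$ from (b) meets $K_\de$ only at $q$. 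Hence $bK_\de$ contains no nondegenerate segment.

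I expect the single nontrivial input to be the first-order optimality step: one must justify the (one-sided) differentiability of $v\mapsto|C_v(q)|$ with the displayed derivative — a Fubini/dominated-convergence computation, since a small rotation of the cutting hyperplane alters the cap by a thin wedge over $H\cap K$ — together with the continuity of $D_K$. Both are classical for the uniform measure on a convex body; everything else reduces to the definition of depth and elementary convexity.
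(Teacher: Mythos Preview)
Your argument is correct and self-contained. Note, however, that the paper does not actually prove parts (b) and (c): it observes that (a) is immediate from the definition and then refers the reader to \cite[Lemma~2]{ScWe94} for (b) and (c). What you have written is essentially the classical proof behind that citation: represent $K_\de$ as an intersection of closed halfspaces to obtain convexity and closedness; use joint continuity of $(q,v)\mapsto|C_v(q)|$ together with compactness of $S^{d-1}$ to realise $D_K(q)$ as a minimum; and then run the first-variation (rotation) computation to force the minimising hyperplane to pass through the centroid of its section, with the same variation yielding the single-point contact. Your ordering---convexity and compactness first, then (b), then strict convexity from the uniqueness in (b)---is exactly the natural dependency structure. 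The two analytic inputs you flag (differentiability of the cap volume under rotation of the cutting hyperplane, and continuity of $D_K$) are indeed the only places requiring care, and both are standard for the uniform measure on a convex body.
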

\noindent Statement $(a)$ follows directly from the definition. See \cite[Lemma 2]{ScWe94} for proofs of $(b)$ and $(c)$.

In the rest of this section, we assume that the open convex body $K\subset \Rd$ satisfies  
\be\label{E:can_form_a}
\mathbb B(o,\gamma)\subset K\subset \mathbb B(o,1/2),\ee 
where $\gamma\in (0, 1/2]$ is a fixed constant.

We now turn to some properties of Hilbert metric balls that will play a crucial role later in the paper. The first result, stated in \cite[Lemma $11$]{VeWa21} in terms of the so-called asymptotic balls, can be reformulated as a comparison of Hilbert metric balls with suitable dilates of $K$.

\begin{lemma}[\hspace{1sp}{\cite[Lemma 11]{VeWa21}}]\label{L:DilatesHilb} For any $r\in\R$,
\bes 
B_K(o, r)\subset \left(1-e^{-2r}\right)K
\subset B_K\left(o,r+\frac{1}{2}\ln \left(1+\frac{1}{2\gamma}\right)\right).
\ees   
\end{lemma}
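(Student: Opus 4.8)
The plan is to argue directly from the definition of $d_K$ along each ray from $o$, which reduces both inclusions to elementary one‑variable inequalities. First I would dispose of the range $r\le 0$: there both $B_K(o,r)$ and $(1-e^{-2r})K$ are empty by the conventions fixed above, so there is nothing to prove. Assume henceforth $r>0$. Since $o$ belongs to all three sets, it suffices to treat an arbitrary $y\in K\setminus\{o\}$. Put $u=y/\|y\|$ and let $-t_-u,\ t_+u$ (with $t_-,t_+>0$ and $t_+>\|y\|$) be the two points in which the line through $o$ and $y$ meets $bK$, labelled so that $-t_-u,\ o,\ y,\ t_+u$ occur in consecutive order. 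Substituting $p=o$, $q=y$, $a=-t_-u$, $b=t_+u$ into the definition of the Hilbert metric and simplifying the four norms yields the identity
\[
d_K(o,y)=\frac12\ln\frac{t_+}{t_+-\|y\|}+\frac12\ln\left(1+\frac{\|y\|}{t_-}\right),
\]
and the structural point is that both summands are nonnegative.

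For the first inclusion I would use that the second summand is $\ge 0$, so $d_K(o,y)<r$ already forces $\tfrac12\ln\tfrac{t_+}{t_+-\|y\|}<r$, i.e.\ $t_+<e^{2r}(t_+-\|y\|)$, which rearranges to $\|y\|<(1-e^{-2r})\,t_+$. Since $K$ is convex and contains $o$, the ray from $o$ in direction $u$ meets $K$ exactly in the points at distance $<t_+$ from $o$, so $(1-e^{-2r})K$ meets it exactly in the points at distance $<(1-e^{-2r})t_+$; hence $y\in(1-e^{-2r})K$. This gives $B_K(o,r)\subset(1-e^{-2r})K$.

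For the second inclusion, suppose $y\in(1-e^{-2r})K$, so $\|y\|<(1-e^{-2r})t_+$ by the same ray description. Then $t_+-\|y\|>e^{-2r}t_+$, whence the first summand in the identity is $<r$. For the second summand I would invoke the normalization \eqref{E:can_form_a}: $\mathbb{B}(o,\gamma)\subset K$ forces $t_-\ge\gamma$, while $y\in K\subset\mathbb{B}(o,1/2)$ forces $\|y\|<1/2$, so $\|y\|/t_-<1/(2\gamma)$ and the second summand is $<\tfrac12\ln\!\big(1+\tfrac1{2\gamma}\big)$. Adding the two bounds gives $d_K(o,y)<r+\tfrac12\ln\!\big(1+\tfrac1{2\gamma}\big)$, i.e.\ $y\in B_K\!\big(o,\,r+\tfrac12\ln(1+\tfrac1{2\gamma})\big)$, completing the proof.

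I do not expect a genuine obstacle here; the two points that merit a little care are (i) deriving the displayed identity for $d_K(o,y)$, which is a short computation provided the four collinear points are ordered correctly, and (ii) justifying that dilating by $\lambda\in(0,1)$ scales the extent of $K$ along every ray through $o$ by exactly $\lambda$ — immediate from convexity together with $o\in K$, but worth stating explicitly. Everything else is elementary algebra plus the bookkeeping of the degenerate cases $r\le 0$ and $y=o$.
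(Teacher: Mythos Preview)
Your proof is correct. The identity for $d_K(o,y)$ is derived cleanly, the ray description of $\lambda K$ is exactly what is needed, and the bounds $t_-\ge\gamma$ and $\|y\|<1/2$ follow from the standing hypothesis \eqref{E:can_form_a} once you observe $(1-e^{-2r})K\subset K$ (which you use implicitly but might state explicitly).

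As for comparison with the paper: the paper does not actually prove this lemma---it is quoted verbatim from \cite[Lemma~11]{VeWa21} (reformulated from ``asymptotic balls'' into the language of dilates), with no argument given in either the main text or the appendix. So there is no proof in the paper to compare against. Your self-contained elementary argument, working ray-by-ray and splitting $d_K(o,y)$ into the two nonnegative summands, is the natural direct route and would in fact make the paper more self-contained.
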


The following result captures the volume growth behavior of Hilbert metric balls, and is a combination of results from \cite[Theorem A]{Th17} and \cite[Theorem $2.1$]{Ve13}.

\begin{lemma}\label{L:VolBound}
For any $x\in K$ and $r>0,$
\bes \normalfont{\vol}_K B_K(x, r)= \Omega(r^d) \quad \quad \text{and} \quad \quad \normalfont{\vol}_K B_K(x, r)= O(e^{(d-1)r}).\ees
\end{lemma}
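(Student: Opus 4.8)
\emph{Overview and reduction.} The plan is to reduce to a normalized body, rewrite the Busemann volume as a Lebesgue integral of an explicit density, establish the easy (bounded‑radius) half of the lower bound by hand, and import the two sharp global volume‑growth estimates from the cited papers for everything else. The reduction is to canonical form: the Hilbert metric is affine invariant, so $B_{AK}(Ax,r)=A(B_K(x,r))$ for any invertible affine $A$, and the Busemann density at a point picks up the factor $|\det A|^{-1}$ (the unit ball of the Hilbert–Finsler norm is pushed forward by the linear part of $A$), which cancels the Jacobian; hence $\vol_{AK}(AS)=\vol_K(S)$, and in particular $\vol_{AK}B_{AK}(Ax,r)=\vol_K B_K(x,r)$. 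After this reduction every constant below may be taken to depend only on $d$, and \eqref{E:can_form_a} holds with $\gamma$ dimensional, so Lemma~\ref{L:DilatesHilb} is available.

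\emph{The density.} By Busemann's definition of volume, $\vol_K$ has Lebesgue density $y\mapsto\omega_d/|B^{\mathrm{Hilb}}_y|$, where $B^{\mathrm{Hilb}}_y$ is the unit ball of the Hilbert–Finsler norm at $y$. In a direction $v$ the radial function of $B^{\mathrm{Hilb}}_y$ is the harmonic mean of the two distances from $y$ to $bK$ along $\pm v$, whereas that of $M^1(y)-y=(K-y)\cap(y-K)$ is the minimum of those same two distances; since a harmonic mean lies between the minimum and twice the minimum, I get the sandwich $M^1(y)-y\subseteq B^{\mathrm{Hilb}}_y\subseteq 2(M^1(y)-y)$, and therefore
\bes
\frac{\omega_d}{2^{d}\,|M^1(y)-y|}\ \le\ \frac{\omega_d}{|B^{\mathrm{Hilb}}_y|}\ \le\ \frac{\omega_d}{|M^1(y)-y|}.
\ees
Everything then reduces to estimating the Lebesgue measure of Hilbert balls weighted by $|M^1(y)-y|^{-1}$.

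\emph{Lower bound.} First I would record an elementary, fully uniform estimate. Since $M^1(x)\subseteq K$ and the Hilbert ball of radius $r$ centred at the centre of a centrally symmetric convex body is its $\tanh(r)$‑dilate about that centre, monotonicity of $d_K$ under inclusion gives $B_K(x,r)\supseteq x+\tanh(r)(M^1(x)-x)$; a short convex‑combination argument shows that for every $y$ in this set $(K-y)\cap(y-K)\subseteq 2(M^1(x)-x)$, so the left‑hand density bound yields
\bes
\vol_K B_K(x,r)\ \ge\ \frac{\omega_d}{4^{d}\,|M^1(x)-x|}\cdot|x+\tanh(r)(M^1(x)-x)|\ =\ \frac{\omega_d}{4^{d}}\,\tanh(r)^{d},
\ees
valid for all $x$ and all convex $K$; for $r\le 1$ this is already $\Omega(r^{d})$. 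The remaining content—that the growth stays of order at least $r^{d}$ for all $r$, the polytope case being the extremal slowest one—is precisely \cite[Theorem~2.1]{Ve13}, which I would invoke rather than reprove.

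\emph{Upper bound, and the main obstacle.} Using the right‑hand density bound and $B_K(x,r)\subseteq x+(1-e^{-2r})(K-x)$ from Lemma~\ref{L:DilatesHilb} (after translating $x$ to $o$), the upper bound reduces to controlling $\int_{tK'}|(K'-y)\cap(y-K')|^{-1}\,dy$ with $K'=K-x$ and $t=1-e^{-2r}\to 1$. A co‑area decomposition in the distance‑to‑boundary function, using $\mathcal H^{d-1}(b(K'\ominus s))\le\mathcal H^{d-1}(bK')=O(1)$ and $|(K'-y)\cap(y-K')|\ge\omega_d\,\mathrm{dist}(y,bK')^{d}$, already yields some exponential bound in $r$, but not with the sharp exponent. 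Getting the sharp rate $e^{(d-1)r}$—with the ellipsoid, i.e.\ real hyperbolic space in the Klein model, as the fastest‑growing extremal case—requires a precise description of how $|(K'-y)\cap(y-K')|$ (equivalently, the Macbeath region at $y$) collapses as $y\to bK'$, and how that rate depends on the local boundary geometry (flat facets forcing polynomial growth, strictly convex smooth boundary forcing the maximal exponential rate). That is exactly the content of \cite[Theorem~A]{Th17}, and I would import it directly; it, together with the polynomial lower bound above, is the main—and only genuinely hard—ingredient, the lemma being precisely the combination of the two.
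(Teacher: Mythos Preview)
The paper offers no proof of this lemma beyond recording it as a direct combination of \cite[Theorem~A]{Th17} and \cite[Theorem~2.1]{Ve13}, and your proposal ultimately rests on exactly these two cited results, so the approaches coincide. The additional material you supply---the affine-invariance reduction, the Macbeath-region density sandwich, and the elementary uniform bound $\vol_K B_K(x,r)\ge(\omega_d/4^{d})\tanh(r)^{d}$---is correct but goes beyond what the paper itself provides.
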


The next lemma shows that Macbeath ellipsoids act as approximants to Hilbert metric balls.

\begin{lemma}[{\hspace{1sp}\cite[Corollary~2.1]{AbMo24}}]\label{L:HilbMac} Let $x\in K$ and $\lambda\in (0,1)$ be fixed. Then
\bes 
B_K\left(x,\frac{1}{2}\ln\left(1+\frac{\lambda}{\sqrt{d}}\right)\right)\subset E^{\lambda}(x)\subset B_K\left(x, \frac{1}{2}\ln\frac{1+\lambda}{1-\lambda}\right).
\ees   
\end{lemma}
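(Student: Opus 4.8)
The plan is to reduce both inclusions to an elementary, purely one-dimensional comparison carried out on each chord of $K$ through $x$, and then to bridge the Macbeath \emph{region} and the Macbeath \emph{ellipsoid} via John's theorem. The starting observation is that both $B_K(x,r)$ and $M^\lambda(x)$ are determined chordwise. For a unit vector $u$, let $\rho^+(u)$ and $\rho^-(u)$ be the distances from $x$ to $bK$ along $u$ and $-u$, so the chord of $K$ through $x$ in direction $u$ has endpoints $a = x - \rho^-(u)u$ and $b = x + \rho^+(u)u$. For $y = x + su$ with $0 \le s < \rho^+(u)$, the points $a,x,y,b$ are in consecutive order, and the definition of the Hilbert metric gives directly
\[
d_K(x,y) \;=\; \frac{1}{2}\ln\frac{\,1 + s/\rho^-(u)\,}{\,1 - s/\rho^+(u)\,},
\]
while unwinding $y - x \in \lambda\big((K-x)\cap(x-K)\big)$ shows that $y \in M^\lambda(x)$ if and only if $s < \lambda\,\rho^+(u)$ and $s < \lambda\,\rho^-(u)$. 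Thus, for any line $\ell$ through $x$, both $B_K(x,r)\cap\ell$ and $M^\lambda(x)\cap\ell$ are open segments containing $x$ and depending only on $K\cap\ell$.

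Next I would establish the two chordwise sandwich inclusions
\[
M^\lambda(x)\;\subseteq\; B_K\!\left(x,\tfrac12\ln\tfrac{1+\lambda}{1-\lambda}\right)
\qquad\text{and}\qquad
B_K\!\left(x,\tfrac12\ln(1+\lambda)\right)\;\subseteq\; M^\lambda(x)\qquad(\lambda\in(0,1)).
\]
For the first: if $y = x+su \in M^\lambda(x)$, then $\alpha := s/\rho^+(u) < \lambda$ and $\beta := s/\rho^-(u) < \lambda$, hence $\tfrac{1+\beta}{1-\alpha} < \tfrac{1+\lambda}{1-\lambda}$ and $d_K(x,y) < \tfrac12\ln\tfrac{1+\lambda}{1-\lambda}$. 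For the second: if $d_K(x,y) < \tfrac12\ln(1+\lambda)$ with $y=x+su$, then $\tfrac{1+\beta}{1-\alpha} < 1+\lambda$ in the same notation; since $\alpha\in[0,1)$ and $\beta\ge0$ one has $\tfrac{1+\beta}{1-\alpha}\ge 1+\beta$ and $\tfrac{1+\beta}{1-\alpha}\ge\tfrac{1}{1-\alpha}\ge 1+\alpha$, so $\alpha<\lambda$ and $\beta<\lambda$, i.e.\ $y\in M^\lambda(x)$.

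The bridge to the ellipsoid is John's theorem. Since $M^\lambda(x)$ is a bounded convex body symmetric about $x$, its inner John ellipsoid $E^\lambda(x)$ is centered at $x$, and the symmetric case of John's theorem gives $E^\lambda(x)\subseteq M^\lambda(x)\subseteq x + \sqrt d\,(E^\lambda(x)-x)$. Because $M^\nu(x) = x + \nu\big(M^1(x)-x\big)$ scales linearly in $\nu$, the right inclusion rearranges to $M^{\lambda/\sqrt d}(x)\subseteq E^\lambda(x)$, so altogether $M^{\lambda/\sqrt d}(x)\subseteq E^\lambda(x)\subseteq M^\lambda(x)$. Feeding this into the two chordwise inclusions above — the second applied with $\lambda$ replaced by $\lambda/\sqrt d\in(0,1)$ — yields
\[
B_K\!\left(x,\tfrac12\ln\!\big(1+\tfrac{\lambda}{\sqrt d}\big)\right)\subseteq M^{\lambda/\sqrt d}(x)\subseteq E^\lambda(x)\subseteq M^\lambda(x)\subseteq B_K\!\left(x,\tfrac12\ln\tfrac{1+\lambda}{1-\lambda}\right),
\]
which is the claim. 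I do not expect a serious obstacle: the only substantive input is the sharp symmetric constant $\sqrt d$ in John's theorem, and everything else is a one-variable inequality. The point requiring care is the bookkeeping of strict inequalities and of open versus closed boundaries — $K$, $M^\lambda(x)$ and $B_K(x,r)$ are all open — so one must make sure the chord formula and the membership characterization of $M^\lambda(x)$ respect strictness (for instance, that $E^\lambda(x)$ is taken as the open John ellipsoid); this is routine.
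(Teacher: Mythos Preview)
Your argument is correct. The paper itself does not prove this lemma; it is quoted verbatim as \cite[Corollary~2.1]{AbMo24} and used as a black box, so there is no in-paper proof to compare against. Your route---chordwise identification of $M^\lambda(x)$ and $B_K(x,r)$, the two one-variable inequalities giving $B_K(x,\tfrac12\ln(1+\lambda))\subset M^\lambda(x)\subset B_K(x,\tfrac12\ln\tfrac{1+\lambda}{1-\lambda})$, and then the symmetric John constant $\sqrt d$ to pass from $M^{\lambda/\sqrt d}(x)$ to $E^\lambda(x)$---is exactly the standard derivation and is what one expects the cited source to do as well. The only cosmetic point is the open/closed bookkeeping you already flagged: in this paper $K$ is open and the John ellipsoid is taken open (see the canonical-form definition), so $E^\lambda(x)$ is open and all your strict inequalities line up.
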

An important expansion-containment property of Macbeath ellipsoids is captured in the following lemma.
\begin{lemma}[\hspace{1sp}{\cite[Corollary~3.1]{AbMo24}}]\label{L:ellips}Let $\lambda\in (0,1)$ be fixed. Suppose $x,y\in K$ are such that $E^{\lambda}(x)\cap E^{\lambda}(y)\neq \emptyset.$ Then, for any $\alpha\geq 0$ and $\beta=\dfrac{2+\alpha(1+\lambda)}{1-\lambda}\sqrt{d},$
\[E^{\alpha\lambda}(y)\subset E^{\beta \lambda}(x).\]
\end{lemma}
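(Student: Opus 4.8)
The plan is to reduce the statement to a purely affine fact about Macbeath \emph{regions} and then pass back to ellipsoids via John's theorem. The key observations are: (i) $E^{\lambda}(z)\subset M^{\lambda}(z)$ for every $z\in K$, since $E^{\lambda}(z)$ is inscribed in $M^{\lambda}(z)$; (ii) the Macbeath region and its John ellipsoid are homothetic in the parameter, i.e.\ $M^{t\lambda}(z)=z+t\big(M^{\lambda}(z)-z\big)$ and hence $E^{t\lambda}(z)=z+t\big(E^{\lambda}(z)-z\big)$ for all $t>0$; and (iii) since $M^{\mu}(z)$ is centrally symmetric about $z$, John's theorem for symmetric bodies gives $M^{\mu}(z)\subset z+\sqrt{d}\big(E^{\mu}(z)-z\big)=E^{\sqrt{d}\,\mu}(z)$, which is the sole source of the factor $\sqrt{d}$ in the constant $\beta$. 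By (i), the hypothesis $E^{\lambda}(x)\cap E^{\lambda}(y)\neq\emptyset$ already yields $M^{\lambda}(x)\cap M^{\lambda}(y)\neq\emptyset$, so it suffices to work at the level of Macbeath regions.

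The heart of the argument is the following affine claim: \emph{if $M^{\lambda}(x)\cap M^{\lambda}(y)\neq\emptyset$, then for every $\mu\ge 0$ one has $M^{\mu}(y)\subset M^{\nu}(x)$ with $\nu=\dfrac{2\lambda+(1+\lambda)\mu}{1-\lambda}$.} To prove this, fix $z\in M^{\lambda}(x)\cap M^{\lambda}(y)$ and write $z=x+\lambda u=y+\lambda v$, where $x\pm u\in K$ and $y\pm v\in K$ by definition of the Macbeath regions; this gives the useful identity $y-x=\lambda(u-v)$. Given any $w=y+\mu v'\in M^{\mu}(y)$ (so $y\pm v'\in K$), one computes $w-x=\mu v'+\lambda(u-v)$, and the goal is to show $x\pm\frac{w-x}{\nu}\in K$. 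This is done by exhibiting $x\pm\frac{w-x}{\nu}$ explicitly as a convex combination of $x$, $x\pm u$, $y\pm v$, $y\pm v'$ (all of which lie in $K$): after substituting $y=x+\lambda(u-v)$, matching the coefficients of $u$, $v$, $v'$ and the constant term determines the combination uniquely up to the choice of signs, and a short computation shows that all the coefficients are nonnegative precisely when $\nu\ge\frac{2\lambda+(1+\lambda)\mu}{1-\lambda}$ (a minor case split according to whether $\mu\le 1$ or $\mu\ge 1$ decides whether $x+u$ or $x-u$ is used). The central symmetry of $M^{\lambda}(x)$ and $M^{\mu}(y)$ about $x$ and $y$ is exactly what makes both sign choices ($w$ and its reflection through $y$) available.

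Finally, chaining the inclusions with $\mu$ replaced by $\alpha\lambda$:
\[
E^{\alpha\lambda}(y)\ \subset\ M^{\alpha\lambda}(y)\ \subset\ M^{\nu}(x)\ \subset\ E^{\sqrt{d}\,\nu}(x),\qquad
\nu=\frac{2\lambda+(1+\lambda)\alpha\lambda}{1-\lambda}=\frac{\lambda\big(2+\alpha(1+\lambda)\big)}{1-\lambda},
\]
so $\sqrt{d}\,\nu=\lambda\cdot\frac{2+\alpha(1+\lambda)}{1-\lambda}\sqrt{d}=\beta\lambda$, which gives $E^{\alpha\lambda}(y)\subset E^{\beta\lambda}(x)$, as required. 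I expect the main obstacle to be the bookkeeping in the affine claim: one must pick the right representation as a convex combination and verify nonnegativity of all coefficients with the \emph{sharp} threshold on $\nu$, rather than a merely polynomial one. (A softer route---bounding $d_K(x,y)$ via Lemma~\ref{L:HilbMac} and the triangle inequality, then re-expanding through Lemma~\ref{L:HilbMac} again---proves an inclusion of this shape, but with a strictly larger constant than the stated $\beta$, so it does not suffice.)
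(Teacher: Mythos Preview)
The paper does not prove this lemma at all; it is simply quoted from \cite[Corollary~3.1]{AbMo24}. So there is no ``paper's own proof'' to compare against beyond that citation.

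Your argument is correct and is, in fact, the standard route to this result (and presumably close to what \cite{AbMo24} does): pass from $E^\lambda$ to $M^\lambda$ via the trivial inclusion, establish the Macbeath-region expansion--containment $M^{\mu}(y)\subset M^{\nu}(x)$ with $\nu=\dfrac{2\lambda+(1+\lambda)\mu}{1-\lambda}$ by exhibiting the endpoints $x\pm (w-x)/\nu$ as convex combinations of $x$, $x\pm u$, $y\pm v$, $y\pm v'$, and then return to ellipsoids by John's theorem for symmetric bodies, picking up exactly the $\sqrt{d}$ factor. Your bookkeeping checks out: for the ``reflected'' sign the binding constraint on $a\geq 0$ is precisely $\nu\geq \dfrac{2\lambda+(1+\lambda)\mu}{1-\lambda}$, while for the other sign (after the $\mu\lessgtr 1$ case split you mention) the required threshold on $\nu$ is strictly smaller, so the stated $\nu$ suffices in all cases. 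The homothety identity $E^{t\lambda}(z)=z+t\big(E^{\lambda}(z)-z\big)$, which you use to identify $z+\sqrt{d}\big(E^{\nu}(z)-z\big)$ with $E^{\sqrt{d}\,\nu}(z)$, is legitimate because the John ellipsoid is affinely covariant and $M^{t\lambda}(z)$ is a dilate of $M^{\lambda}(z)$ about $z$.

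One small presentational remark: you describe the case split as deciding ``whether $x+u$ or $x-u$ is used,'' but in the computation it is really the choice among $y\pm v$ (and correspondingly $x\pm u$) that changes with $\mu\lessgtr 1$; you may want to make that explicit when you write it up. Otherwise the proposal is a complete proof.
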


 A {\em cap} $C$ of $K$ is a nonempty intersection of $K$ with a (closed) halfspace. Let $H$ denote the boundary hyperplane of this halfspace. The {\em base} of $C$ is the set $K\cap H$.
 The {\em apex} of $C$ is any point on $bC\cap bK$ such that a supporting hyperplane of $K$ at this point is
 parallel to $H$. The {\em width} of $C$ is the distance between $H$ and this supporting hyperplane.  The following result is a combination of results from \cite{VeWa21} and \cite{AFM24}, but we include a proof in the appendix for the sake of completeness. 
 \begin{lemma}\label{L:width} Let $C$ be a cap of $K$ and let $x$ denote the centroid of the base of $C.$ Suppose $o\notin \{x\}\cup \normalfont{\inte} C.$ Then
\be\label{E:ray_vol}
 \frac{\sqrt{3}\gamma}{d}\normalfont{|C|}< \normalfont{\text{ray}}(x)<
 \left(\frac{2}{\omega_d}\right)^{1/d}
 \frac{|C|^{1/d}}{4\gamma^2}.
\ee

\end{lemma}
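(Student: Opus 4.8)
The plan is to bound $\mathrm{ray}(x)$ from above and below by relating the width $w$ of the cap $C$ to its volume $|C|$, and then relating $w$ to $\mathrm{ray}(x)$ using the fact that the centroid $x$ sits at a controlled ``depth'' within $C$ along the direction normal to the base. Let me fix notation: let $H$ be the base hyperplane, $C=K\cap H^-$ the cap, $w$ its width (distance from $H$ to the parallel supporting hyperplane $H'$ at the apex), and let $u$ be the inward unit normal to $H$ (pointing into $C$). For a convex body, the centroid of a cap of width $w$ lies at distance between $\tfrac{w}{d+1}$ and $\tfrac{dw}{d+1}$ from the base hyperplane — but here $x$ is the centroid of the \emph{base} $K\cap H$, not of the cap, so instead I want the standard estimate that the centroid of the base section is at Hilbert/Euclidean position such that the segment from $x$ into $C$ along $-u$ (towards the apex side) has length at least a fixed fraction of $w$: specifically, since $K\cap H$ is a convex body of dimension $d-1$ and $C$ is sandwiched between the cone over $K\cap H$ with apex on $H'$ and the cylinder, one gets that the chord of $K$ through $x$ in direction $u$ (resp.\ $-u$) reaching $bK$ has length comparable to $w$; I will use this to pin down $\mathrm{ray}(x)=\|x-p\|$ where $p$ is where the ray from $o$ through $x$ exits $K$.

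The key geometric inputs are: (i) the sandwiching of $C$ between a cone and a cylinder over its base $B:=K\cap H$ gives $\tfrac{w}{d}|B| \le |C| \le w\,|B|$ in $(d-1)$-volume times length (more precisely $|C|\ge \tfrac{1}{d}w\,\mathcal{H}^{d-1}(B)$ from the inscribed cone, and $|C| \le w\,\mathcal{H}^{d-1}(B_{\max})$ for an upper bound); (ii) since $\mathbb{B}(o,\gamma)\subset K$, the cap $C$ — assuming $o\notin\mathrm{int}\,C$ — is cut off by a hyperplane missing $\mathbb{B}(o,\gamma)$, so its width satisfies $w\le 1/2$ and the base $B$ has inradius-type control; and (iii) since $K\subset\mathbb{B}(o,1/2)$, all Euclidean lengths are bounded by $1$. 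For the \textbf{lower bound} $\mathrm{ray}(x) > \tfrac{\sqrt{3}\gamma}{d}|C|$: I would show $\mathrm{ray}(x)$ is at least a fixed multiple of $w$ (the portion of the chord from $x$ past the apex side), then bound $w \ge |C|/\mathcal{H}^{d-1}(B)$, and bound $\mathcal{H}^{d-1}(B)$ from above by a constant depending on $\gamma$ and $d$ — here the $\sqrt{3}$ presumably arises from an explicit estimate on the $(d-1)$-volume of a hyperplane section of $\mathbb{B}(o,1/2)$ together with the centroid-depth constant $\tfrac{1}{d}$ or $\tfrac{1}{d+1}$. For the \textbf{upper bound} $\mathrm{ray}(x) < \big(\tfrac{2}{\omega_d}\big)^{1/d}\tfrac{|C|^{1/d}}{4\gamma^2}$: I would bound $\mathrm{ray}(x) \le w$ (since $x\in H$, the ray from $o$ through $x$ — with $o$ outside $C$ — exits $K$ within the cap, covering at most the width), then use $w \ge$ (something) combined with the lower volume bound for $C$; more precisely, $|C|\ge \tfrac{1}{d}w\,\mathcal{H}^{d-1}(B)$ and $\mathcal{H}^{d-1}(B) \ge c_d \gamma^{d-1}$ when $B$ is a section of $K$ not too close to $bK$ — wait, that needs care, so instead I expect to use a Macbeath-region / floating-body style estimate: the cap $C$ contains a scaled copy of $\mathbb{B}(o,\gamma)$ shrunk towards the apex, giving $|C|\gtrsim (\gamma w)^{d}$, i.e.\ $w \lesssim |C|^{1/d}/\gamma$, and then a further factor of $\gamma$ comes from converting back to $\mathrm{ray}(x)$ via the position of $x$. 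The explicit constant $\tfrac{1}{4\gamma^2}(2/\omega_d)^{1/d}$ will fall out of tracking these two $\gamma$-losses and the volume of a ball.

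The \textbf{main obstacle} is step (iii) and the bookkeeping around it: relating $w$ and $|C|$ \emph{in both directions} with the correct powers of $\gamma$. The inclusion $\mathbb{B}(o,\gamma)\subset K$ gives a lower bound on how ``fat'' the cap is near its base but not near its apex; to get $|C|\gtrsim (\gamma w)^d$ I need to exhibit a full-dimensional ball of radius $\sim \gamma w$ inside $C$, which requires using the John ellipsoid or a Macbeath region of $K$ at a point of $C$ at Euclidean distance $\sim w$ from $bK$, and the depth parameter there is where the second factor of $\gamma$ (hence the $\gamma^2$) enters. The other delicate point is justifying $\mathrm{ray}(x) \le w$ (or $\lesssim w$) and $\mathrm{ray}(x) \gtrsim w$: the hypothesis $o\notin\{x\}\cup\mathrm{int}\,C$ is exactly what guarantees the ray from $o$ through $x$ enters $C$ through $x\in H=bC$ and exits through $bC\cap bK$ on the apex side, so its length inside $K$ (which is $\mathrm{ray}(x)$, since $x$ is already on $bK$'s ``cap'' boundary... no — $x\in H\cap K$, $x$ need not be on $bK$) is the length of the chord of $C$ from $x$ in direction $-u$-ish, bounded above by $w/\sin\theta$ and below by a fixed fraction of $w$ where $\theta$ is the angle between the ray and $H$; controlling $\theta$ away from $0$ uses $\mathbb{B}(o,\gamma)\subset K$ once more. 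I would organize the final proof as: (1) reduce to estimating $w$; (2) prove $w \asymp \mathrm{ray}(x)$ with explicit constants; (3) prove $\tfrac{w}{d}\mathcal{H}^{d-1}(B)\le|C|$ and $|C|\le c_d(\gamma w)^d$-type bounds; (4) combine and compute constants, deferring the routine arithmetic.
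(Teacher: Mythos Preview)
Your high-level plan matches the paper's: reduce to the two-step chain $\mathrm{ray}(x)\asymp w$ and $w\asymp |C|$ (linear on one side, $d$-th root on the other). The relation between $w$ and $|C|$ is handled essentially as you outline: $|C|<w$ because $K$ sits in the unit cube, and $|C|\gtrsim(\gamma w)^d$ by inscribing a half-ball in $C$ (the paper does this by locating a point $v$ on the base with $\delta(v)\ge 2\gamma w$, via Lemma~\ref{L:width2}, rather than a Macbeath-region argument, but your version would also work).

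There is, however, a genuine gap in your treatment of $\mathrm{ray}(x)\gtrsim w$. Angle control alone is not enough: even if the ray $ox$ makes a definite angle with $H$, the point $x$ could a priori sit very close to $\partial B$ inside $H$, in which case $\mathrm{ray}(x)$ would be tiny regardless of the angle. What saves the day is the classical centroid-chord inequality (Hammer's Theorem~II, \cite{Ha51}): since $x$ is the centroid of the $(d-1)$-dimensional convex body $B=K\cap H$, every chord of $B$ through $x$ is divided by $x$ in ratio at least $1/d$. The paper uses this inside a $2$-plane $\Pi$ through $o$, $x$, and the apex $z$: writing $\Pi\cap B=[p,q]$, the inequality gives $\|p-x\|\ge\|p-q\|/d$, and then the sine rule in the triangle $pzq$ (with the angle $\angle pzq\ge 2\arcsin\gamma$ because $K\supset\mathbb{B}(o,\gamma)$) yields $\mathrm{ray}(x)\ge\|x-y\|\ge \tfrac{\sqrt{3}\gamma}{d}w$. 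The $\sqrt{3}$ is not a section-volume constant as you guessed; it comes from $\sin(2\arcsin\gamma)=2\gamma\sqrt{1-\gamma^2}\ge\sqrt{3}\,\gamma$ for $\gamma\le 1/2$. Your sketch mentions a ``centroid-depth constant $1/d$ or $1/(d+1)$'' in passing, but applies it to the wrong object (the chord of $K$ in the normal direction $u$, which need not be long), and never connects it to the actual ray through $o$.

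On the upper side, your first assertion $\mathrm{ray}(x)\le w$ is false in general, since the ray $ox$ need not be normal to $H$; you correctly backtrack to $w/\sin\theta$. The paper's version of this is a clean similar-triangles computation giving $\mathrm{ray}(x)<\tfrac{1}{2\gamma}w$, which together with $w\lesssim\gamma^{-1}|C|^{1/d}$ accounts for the $\gamma^{-2}$ in the final bound. So your accounting of the two $\gamma$-losses is right, but the mechanism for each is more concrete than your sketch indicates.
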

 The next result is obtained by slightly modifying the proof of \cite[Lemma~4.2]{AFM17}. 

 \begin{lemma}[\hspace{1sp}{\cite[Lemma~4.2]{AFM17}}]\label{L:width2} For any point $x\in K\setminus\{o\},$ 
\[\text{\normalfont ray}(x)\leq \frac{1}{2\gamma}\delta(x).\]    
\end{lemma}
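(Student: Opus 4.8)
The plan is to combine the double inclusion~\eqref{E:can_form_a} with an ``ice-cream cone'' construction. Write $u = x/\|x\|$ and let $p\in bK$ be the point where the ray $\{tu : t\geq 0\}$ leaves $K$. Since $o\in K$ and $K$ is open, bounded, and convex, this ray meets $K$ in a half-open segment, so $x$ lies strictly between $o$ and $p$; thus $x = sp$ with $s := \|x\|/\|p\|\in(0,1)$, and by definition $\text{ray}(x) = \|x-p\| = \|p\| - \|x\| = (1-s)\,\|p\|$.

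The one substantive step is to show that the Euclidean ball $\mathbb{B}\big(x,(1-s)\gamma\big)$ is contained in $K$. Indeed, every $z\in\mathbb{B}\big(x,(1-s)\gamma\big)$ can be written as $z = (1-s)\,w + s\,p$, where $w := (z - sp)/(1-s)$ satisfies $\|w\| = \|z - x\|/(1-s) < \gamma$, so $w\in\mathbb{B}(o,\gamma)\subset K$; since $p\in bK$, the point $z$ lies on the half-open segment $[w,p)$, which is contained in $\inte K = K$ by a standard convexity fact. (Equivalently, one runs the argument with $p$ replaced by $tu$ for $t<\|p\|$, where $tu\in K$, obtaining $\mathbb{B}\big(x,(1-\tfrac{\|x\|}{t})\gamma\big)\subset K$, and then lets $t\uparrow\|p\|$.) In either case we conclude $\delta(x)\geq(1-s)\gamma$.

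Finally, since $p\in bK\subset\overline{\mathbb{B}}(o,1/2)$ we have $\|p\|\leq 1/2$, whence
\[
\text{ray}(x) = (1-s)\,\|p\| \;\leq\; \tfrac{1}{2}\,(1-s) \;\leq\; \tfrac{1}{2\gamma}\,\delta(x),
\]
which is the claimed bound. I do not anticipate a genuine obstacle: the only care needed is in the open/closed, $K$-versus-$\overline{K}$ bookkeeping of the middle step, already dispatched above. Conceptually this is precisely the ``slight modification'' of \cite[Lemma~4.2]{AFM17} referred to in the statement — that lemma shows the inradius along a chord through a deep interior point degrades at most linearly toward the boundary, and here we simply record the explicit constant forced by~\eqref{E:can_form_a}.
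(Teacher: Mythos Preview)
Your proof is correct. The paper does not actually supply its own proof of this lemma---it only states it with a citation to \cite[Lemma~4.2]{AFM17} and the remark that a slight modification is needed---so there is nothing to compare against beyond noting that your convex-interpolation (``ice-cream cone'') argument, showing $\mathbb{B}\big(x,(1-s)\gamma\big)\subset K$ via $z=(1-s)w+sp$ with $w\in\mathbb{B}(o,\gamma)$, is exactly the standard route and presumably what the cited lemma does.
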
 

Finally, a crucial ingredient in the design and analysis of our data structure and algorithm in Section~\ref{SS:DOA} is the following comparison between the depth-trimmed regions of $K$ and metric balls in the Hilbert geometry of $K,$ both of which are affine invariant. Since the proof is technical, we defer it to the appendix, where we also comment on the sharpness of the result. 

\begin{theorem}\label{T:FB-HB} Let $K\subset\mathbb{R}^d$ be an open convex body. Suppose that $\mathbb B(o,\gamma)\subset K\subset \mathbb B(o,1/2)$ for some $\gamma\in(0,1/2]$. Let $\de\in (0, 1/2]$ be such that $o\in K_{\de}.$ Then
\be
B_K\left(o, \frac{1}{2d}\ln\frac{1}{\de}+\frac{1}{2d}\ln \dfrac{2^{2d-1}\gamma^{3d}\omega_d}{|K|}\right)%
\subset K_\de\subset %
 B_K\left(o,\frac{1}{2}\ln\frac{1}{\de}+\frac{1}{2}\ln \frac{d(1+2\gamma)}{4\sqrt{3}\gamma^2\,|K|}\right),\label{E:main_inc}
\ee
where the ball in the left-hand side is nonempty only when ${\de\,|K|}<2^{2d-1}\gamma^{3d}\omega_d.$
\end{theorem}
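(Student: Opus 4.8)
The plan is to prove each of the two inclusions in~\eqref{E:main_inc} by first comparing $K_\de$ with a suitable Euclidean dilate $(1-t)K$ of $K$ about $o$, and then converting that comparison into a statement about Hilbert balls via Lemma~\ref{L:DilatesHilb}. The bridge is the radial function about $o$: writing $p(x)\in bK$ for the point where the ray from $o$ through $x$ meets $bK$, one has $\text{ray}(x)=\|p(x)\|-\|x\|$, so the containment $(1-t)K\subset K_\de$ (respectively $K_\de\subset(1-t)K$) is equivalent to $\text{ray}(x)\le t\,\|p(x)\|$ (respectively $\ge$) for every $x\in bK_\de$. (Using $\mathbb B(o,\gamma)\subset K$ one checks $D_K(o)\ge\tfrac{\gamma^d\omega_d}{2|K|}$, and since $(2\gamma)^{2d}\le 1$ this is at least $\tfrac{2^{2d-1}\gamma^{3d}\omega_d}{|K|}$; hence whenever the left ball in~\eqref{E:main_inc} is nonempty we have $D_K(o)>\de$, so $o\in\inte K_\de$ and this radial description is legitimate.) The key structural input is Lemma~\ref{L:FB}$(b)$: each $x\in bK_\de$ is the centroid of the base of a cap $C$ of $K$ with $|C|=\de\,|K|$, namely the cap cut off by the supporting hyperplane of $K_\de$ at $x$ lying on the far side from $o$. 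Since $o\in K_\de$ lies on the $K_\de$-side of that hyperplane and $o\neq x$, the hypotheses $o\notin\{x\}\cup\inte C$ of Lemma~\ref{L:width} hold.

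For the outer inclusion, I would feed $|C|=\de\,|K|$ into the \emph{lower} bound of Lemma~\ref{L:width}, obtaining $\text{ray}(x)>\tfrac{\sqrt3\,\gamma}{d}\,\de\,|K|$ for every $x\in bK_\de$. Since $K\subset\mathbb B(o,1/2)$ gives $\|p(x)\|\le\tfrac12$, this forces $\text{ray}(x)>t'\,\|p(x)\|$ with $t'=\tfrac{2\sqrt3\,\gamma\,\de\,|K|}{d}$, hence $bK_\de\subset(1-t')K$ and, by convexity, $K_\de\subset(1-t')K$. Applying the second inclusion of Lemma~\ref{L:DilatesHilb} with $e^{-2r}=t'$ and collecting constants yields $K_\de\subset B_K\!\left(o,\tfrac12\ln\tfrac1\de+\tfrac12\ln\tfrac{d(1+2\gamma)}{4\sqrt3\,\gamma^2|K|}\right)$, which is the right-hand side of~\eqref{E:main_inc}.

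For the inner inclusion I would instead use the \emph{upper} bound of Lemma~\ref{L:width}, namely $\text{ray}(x)<\left(\tfrac{2}{\omega_d}\right)^{1/d}\tfrac{(\de\,|K|)^{1/d}}{4\gamma^2}$ for every $x\in bK_\de$. Now $\mathbb B(o,\gamma)\subset K$ gives $\|p(x)\|\ge\gamma$, so $\text{ray}(x)<t\,\|p(x)\|$ with $t=\tfrac1{4\gamma^3}\left(\tfrac{2\de\,|K|}{\omega_d}\right)^{1/d}$, and therefore $(1-t)K\subset K_\de$. Feeding this into the first inclusion of Lemma~\ref{L:DilatesHilb}, $B_K(o,r)\subset(1-e^{-2r})K$, and choosing $r=\tfrac12\ln\tfrac1t$ gives $B_K(o,r)\subset(1-t)K\subset K_\de$; collecting constants shows this $r$ equals the left-hand radius in~\eqref{E:main_inc}, and the condition $r>0$ unwinds to $\de\,|K|<2^{2d-1}\gamma^{3d}\omega_d$.

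Since the argument is essentially an orchestration of the cited lemmas, I do not expect a single hard step. The delicate parts are: the bookkeeping needed to land exactly on the stated constants — in particular tracking how the $|C|$ versus $|C|^{1/d}$ asymmetry in the two bounds of Lemma~\ref{L:width} turns into the $\tfrac12\ln\tfrac1\de$ versus $\tfrac1{2d}\ln\tfrac1\de$ asymmetry in~\eqref{E:main_inc}; carefully verifying the hypotheses of Lemma~\ref{L:width} together with $o\in\inte K_\de$ in the regime where the left ball is nonempty; and justifying the passage from a pointwise bound on $bK_\de$ to a containment of the whole body $K_\de$. A discussion of the sharpness of the two radii would accompany the proof in the appendix.
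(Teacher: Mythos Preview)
Your plan is correct and matches the paper's own proof essentially step for step: both arguments sandwich $K_\de$ between two dilates $(1-t)K$ by applying the two halves of Lemma~\ref{L:width} to the cap at a boundary point of $K_\de$ (via Lemma~\ref{L:FB}$(b)$), and then convert the dilate sandwich into the Hilbert-ball sandwich using Lemma~\ref{L:DilatesHilb}, arriving at the same constants $t'=\tfrac{2\sqrt3\gamma}{d}\de|K|$ and $t=\tfrac1{4\gamma^3}\big(\tfrac{2\de|K|}{\omega_d}\big)^{1/d}$. The only cosmetic difference is that the paper establishes the dilate containments by contraposition and separates out the degenerate cases $K_\de=\{o\}$ and $o\in bK_\de$ explicitly, whereas you argue directly via the radial function and fold the degenerate cases into the nonemptiness condition on the left ball; both are fine.
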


As a consequence of the above theorem, we obtain a bound on the Hilbert distance between any point in the $\de$-depth-trimmed region and a fixed point within it.

\begin{corollary}\label{C:bound} Let $K\subset\mathbb{R}^d$ be an open convex body in canonical form. Let $\de\in(0,1/2]$ be sufficiently small such that $K_\de\neq \emptyset.$ There exists a point $x_\de\in K_\de$ such that
\be\label{E:main_yeps} 
K_\de\subset B_K\left(x_\de, \frac{1}{2}\ln \frac{1}{\de}+24d^3\right).\ee
Specifically, $x_\de$ can be chosen as $o$ when $o\in K_\de$, and as any point in $K_\de$ when $o\notin K_\de$.
\end{corollary}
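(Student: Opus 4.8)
The plan is to deduce Corollary~\ref{C:bound} from the right-hand inclusion in Theorem~\ref{T:FB-HB}, treating the two cases in the statement separately. Since $K$ is in canonical form, John's theorem gives $\mathbb B(o,\gamma)\subset K\subset\mathbb B(o,1/2)$ with $\gamma:=\tfrac1{2d}$, so $\omega_d(2d)^{-d}\le|K|\le\omega_d 2^{-d}$ and Theorem~\ref{T:FB-HB} applies to $K$ with this $\gamma$. I first record an upper bound for the additive constant on the right of \eqref{E:main_inc}: writing
\[C:=\tfrac12\ln\frac{d(1+2\gamma)}{4\sqrt3\,\gamma^2\,|K|},\]
and substituting $\gamma=\tfrac1{2d}$, $|K|\ge\omega_d(2d)^{-d}$, and $\omega_d\ge(2/\sqrt d)^d$ (the volume of an inscribed cube of side $2/\sqrt d$), one checks by elementary estimates that $C\le 3d^2$; this is the only genuine calculation in the argument.

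When $o\in K_\de$, the right-hand inclusion of Theorem~\ref{T:FB-HB} directly gives $K_\de\subset B_K(o,\tfrac12\ln\tfrac1\de+C)$, and since $C\le 3d^2\le 24d^3$ we may take $x_\de=o$.

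When $o\notin K_\de$, the key point is that this forces $\de$ to be bounded away from $0$. Indeed, $\mathbb B(o,\gamma)\subset K$ implies that every closed halfspace with $o$ on its boundary contains a half of $\mathbb B(o,\gamma)$ and hence meets $K$ in a set of volume at least $\tfrac12\omega_d(2d)^{-d}$; dividing by $|K|\le\omega_d 2^{-d}$ gives $D_K(o)\ge\tfrac12 d^{-d}$. Consequently $o\notin K_\de$ can only happen when $\de>\de':=\tfrac1{2d^d}$, and then $o\in K_{\de'}$ and $K_\de\subset K_{\de'}$. Applying Theorem~\ref{T:FB-HB} with the parameter $\de'$ yields $K_\de\subset K_{\de'}\subset B_K(o,R_0)$ with $R_0=\tfrac12\ln(2d^d)+C$. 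Now let $x_\de$ be an arbitrary point of $K_\de$. For any $y\in K_\de$, both $x_\de$ and $y$ lie in $B_K(o,R_0)$, so the triangle inequality for the Hilbert metric gives $d_K(x_\de,y)<2R_0=\ln(2d^d)+2C$. Finally, $\ln(2d^d)+2C\le\ln 2+d\ln d+6d^2\le 1+7d^2\le 24d^3$ for every $d\ge1$, and since $\tfrac12\ln\tfrac1\de\ge0$ this gives $K_\de\subset B_K(x_\de,\tfrac12\ln\tfrac1\de+24d^3)$, as required.

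The one genuinely useful observation is the reduction in the second case, namely that $o\notin K_\de$ implies $\de>\tfrac1{2d^d}$; this keeps the leading $\tfrac12\ln\tfrac1\de$ term from being spoiled by the factor-of-two loss in the triangle inequality step. Everything else is bookkeeping: tracking the explicit constant $C$ through the inclusions and verifying the crude numerical inequalities $C\le 3d^2$ and $\ln(2d^d)+2C\le 24d^3$ for all $d\ge1$.
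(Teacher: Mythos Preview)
Your proof is correct, and in the second case it is cleaner than the paper's own argument. For $o\in K_\de$ you do exactly what the paper does: plug $\gamma=1/(2d)$ and $|K|\ge\omega_d(2d)^{-d}$ into the right-hand inclusion of Theorem~\ref{T:FB-HB}.

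The genuine difference is in the case $o\notin K_\de$. The paper proceeds by fixing an arbitrary $x_\de\in K_\de$, invoking Lemmas~\ref{L:width} and~\ref{L:width2} (through a somewhat delicate case analysis on whether the ray from $o$ through $x_\de$ enters $\inte K_\de$) to show that the Euclidean distance from $x_\de$ to $bK$ is bounded below by a dimensional constant $r_d$, and then re-applying Theorem~\ref{T:FB-HB} to the translated body $K-x_\de$ with the new inradius $r_d$. Your route is more elementary: you observe that $o\notin K_\de$ forces $\de>\de':=\tfrac12 d^{-d}$ (because $D_K(o)\ge\de'$ from the inscribed ball), apply the already-settled first case at the fixed level $\de'$ to get $K_\de\subset K_{\de'}\subset B_K(o,R_0)$, and finish with the triangle inequality. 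This avoids Lemmas~\ref{L:width} and~\ref{L:width2} entirely and any recentering of $K$. The price you pay is a factor-of-two loss from the triangle inequality, but as you correctly note, this is harmless here since $\tfrac12\ln\tfrac1\de$ is itself bounded by a dimensional constant in this case. The paper's approach would in principle track the $\tfrac12\ln\tfrac1\de$ term more faithfully even for this range of $\de$, but since both bounds are absorbed into the same additive constant $24d^3$, nothing is lost.
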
  
\begin{proof} 
To prove \eqref{E:main_yeps}, first assume that $o\in K_\de$. Since $K$ is in canonical form, John's theorem implies that $\mathbb{B}(o, 1/(2d)) \subset K \subset \mathbb{B}(o, 1/2)$. Thus, by \eqref{E:main_inc} and the fact that $|K|\geq \omega_d/(2d)^d$, we have that
\[K_\de\subset B_K\left( x_{\de},\frac{1}{2}\ln \frac{1}{\de}+\underbrace{\frac{1}{2}\ln \frac{(d+1)(2d)^{d+2}}{4\sqrt{3}\,\omega_d}}_{q_d}\right),
\]
where $x_{\de}=o$. 

Next, suppose $o\notin K_\de\neq \emptyset$. Since any hyperplane through $o$ cuts off a cap of volume at least $\dfrac{\omega_d}{2(2d)^d}$ from $K$, it must be that ${\de\,|K|}\geq  \dfrac{\omega_d}{2(2d)^d}.$ Fix an $x_\de\in K_\de.$ {Consider the ray $\ell$ from $o$ through $x_\de$, and let $x_\delta'$ denote the point on $\ell$ farthest from $o$ that lies in $bK_\delta.$ Note that $\text{ray}(x_{\de})\geq \text{ray}(x_{\de}')$. Moreover, by Lemma~\ref{L:FB}, there is a cap $C$ of $K$, of volume ${\de\,|K|}$, cut off by a supporting hyperplane $H$ of $K_\de$ at $x_\de'$, and $x_{\de}'$ is the centroid of $K\cap H$. Now, there are two possibilities. Either the ray $\ell$ passes through the nonempty interior of $K_\de$, or it is tangential to $K_\de$ and intersects $bK_\de$ at exactly one point (due to the strict convexity of $K_\de$), which is $x_\de'=x_\de$.}

In the former case,  $\inte K_\de$ and $o$ must lie on the same side of the hyperplane $H$. Thus, $o\notin C$, and by Lemma~\ref{L:width} and Lemma~\ref{L:width2}, we obtain that
    \bes 
    \de(x_\de)\geq \frac{1}{d}\text{ray}(x_\de)\geq \frac{1}{d}\text{ray}(x_\de')> \frac{\sqrt{3}}{2d^3}{\de\,|K|}
    \geq \frac{2\sqrt{3}\,\omega_d}{(2d)^{d+3}},
\ees
{where we have used that $\mathbb B(o,1/2d)\subset K\subset \mathbb B(o,1/2)$. In the latter case, let $C'$ denote the cap of $K$ that is complementary to $C$. Then, either $o\in H$,  $o\notin C$, or $o\notin C'$. In all three cases, Lemma~\ref{L:width} applies, and we obtain that}
    \bes 
    \de(x_\de)\geq \frac{1}{d}\text{ray}(x_\de)=\frac{1}{d}\text{ray}(x_\de')> \frac{\sqrt{3}}{2d^3}
    \min\{1-\de,\de\}\,|K|
    \geq \frac{2\sqrt{3}\,\omega_d}{(2d)^{d+3}}.
\ees
It follows that $\mathbb B\left(x_\de, r_d\right)\subset K\subset \mathbb B(x_\de,1),$ where $r_d=\dfrac{2\sqrt{3}\,\omega_d}{(2d)^{d+3}}$. Applying the affine map
    \bes
        T:x\mapsto \frac{1}{2}(x-x_\de),\qquad x\in\Rd,
    \ees
we obtain from Theorem~\ref{T:FB-HB} that 
\[(TK)_\de\subset B_{TK}\left(o,\frac{1}{2}\ln\frac{1}{\de}+\frac{1}{2}\ln \frac{d(1+r_d)}{\sqrt{3}\,r_d^2\,|TK|}\right).
\]
Now, using that $|TK|\geq \omega_d (r_d/2)^d$ and the affine invariance of depth-trimmed regions and Hilbert metric balls, we obtain that 
    \bes
    K_\de\subset B_K\left(x_\de,\frac{1}{2}\ln\frac{1}{\de}+\underbrace{\frac{1}{2}\ln \frac{d2^d(1+r_d)}{\sqrt{3}\,\omega_d\,r_d^{d+2}}}_{q_d'}\right).
    \ees
Then, any choice of $\theta\geq \max\{q_d,q_d'\}$ yields \eqref{E:main_yeps}. We may choose, for instance, $\theta=24d^3$.
\end{proof}

\section{An algorithm for answering the $\boldsymbol{\eps}$-ADQ for $K$}\label{S:ADQ} 

Assuming access to an algorithm for answering exact membership queries for the depth-trimmed regions of $K$, one way to answer the $\eps$-ADQ for a query point $q \in \Rd$ is to perform a binary search to output the largest parameter $\delta$ in the decreasing list
\bes
\left\{ \de_j = \dfrac{1}{2}(1 - \eps)^j :\, j \in \mathbb{N},\, \de_j > \eps \right\} \bigcup\, \{\eps\},
\ees
for which $q \in K_\delta$. If no such $\delta$ exists, then $D_K(q) < \eps$, and we output $\eps$.

We adopt a similar approach in our algorithm, but instead of assuming access to an exact membership oracle, we rely on the following result to answer \emph{approximate} membership queries for the depth-trimmed regions of $K$---more precisely, the $\eps$-AMQ for the $\de$-DTR of $K$, as stated in Subsection~\ref{SSS:AMQ}.


    \begin{theorem}\label{T:AMQ}  Given an open convex body $K\subset \mathbb{R}^d$, $\de\in (0,1/2]$ sufficiently small such that $K_{\de}\neq \emptyset,$ and $\eps\in (0, 1),$ there is a data structure that answers the $\eps$-AMQ for the $\de$-DTR of $K$ with
  \[\text{Storage space}: O\left(\frac{1}{\eps^{d}\,\de^{\frac{3d-1}{2}}} \right) \qquad\text{Query time}: O\left(\frac{1}{\eps\,\de}\ln{\frac{1}{\de}}\right).\]
  The constant factors in the storage space and query time depend only on $d$ (not on $K$, $\eps$, or $\de$).
\end{theorem}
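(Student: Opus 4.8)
The plan is to build the data structure as a geometric graph of Macbeath-type ellipsoids whose centers form a Delone set in $K_\delta$, and then answer the $\eps$-AMQ by a ray-shooting walk through this graph. Normalize $K$ to canonical form using Lemma~\ref{L:FB}$(a)$, so that Corollary~\ref{C:bound} applies and we have a base point $x_\delta\in K_\delta$ with $K_\delta\subset B_K(x_\delta, \tfrac12\ln\tfrac1\delta + 24d^3)$. Fix a small constant $\lambda_0\in(0,1)$ (depending only on $d$) and take a maximal set $X\subset K_\delta$ such that the ellipsoids $E^{\lambda_0}(x)$, $x\in X$, are pairwise disjoint. By maximality, the dilated ellipsoids $E^{c\lambda_0}(x)$ for a suitable constant $c$ cover $K_\delta$; this is where Lemma~\ref{L:ellips} does the heavy lifting, converting ``the small ellipsoids don't meet'' into ``the slightly larger ones cover.'' The data structure stores, for each $x\in X$, the ellipsoid $E^{\lambda_0}(x)$ (equivalently a Macbeath region, computed from $K$), together with the adjacency graph $G$ recording which dilated ellipsoids overlap, plus the designated root $x_\delta$.

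The storage bound comes from counting $\#X$ and bounding the maximum degree of $G$. For the vertex count: by Lemma~\ref{L:HilbMac}, each $E^{\lambda_0}(x)$ contains a Hilbert ball $B_K(x, \rho_0)$ with $\rho_0 = \tfrac12\ln(1+\lambda_0/\sqrt d)$, and these Hilbert balls are pairwise disjoint (being contained in disjoint ellipsoids) and all lie inside $K_\delta \subset B_K(x_\delta, R_\delta)$ with $R_\delta = \tfrac12\ln\tfrac1\delta + 24d^3$. By Lemma~\ref{L:VolBound}, $\mathrm{vol}_K B_K(x,\rho_0) = \Omega(1)$ (constant, since $\rho_0$ is a constant) and $\mathrm{vol}_K B_K(x_\delta, R_\delta) = O(e^{(d-1)R_\delta}) = O(\delta^{-(d-1)/2})$; hence $\#X = O(\delta^{-(d-1)/2})$. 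Wait --- the claimed exponent is $\tfrac{3d-1}{2}$, not $\tfrac{d-1}{2}$, so the stored data per vertex must itself carry a $\delta^{-d}$ factor or we must store a finer $\eps$-dependent subdivision; more precisely, to certify membership ``$q\notin K_{(1-\eps)\delta}$'' we attach to each boundary-adjacent ellipsoid a local cap-witness structure, and refining the Delone set near $bK_\delta$ to resolution matching the $\eps$-gap between $K_\delta$ and $K_{(1-\eps)\delta}$ contributes the extra $1/\eps^d$ and additional $\delta$ powers. The degree bound: if $E^{c\lambda_0}(x)\cap E^{c\lambda_0}(y)\neq\emptyset$ then by Lemma~\ref{L:ellips} $E^{\lambda_0}(y)\subset E^{C'\lambda_0}(x)$, and a volume-packing argument inside $E^{C'\lambda_0}(x)$ (all the disjoint $E^{\lambda_0}(y)$ have comparable Hilbert volume) bounds the degree by a constant depending only on $d$.

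For the query, given $q\in\Rd$ we shoot a ray from the root $x_\delta$ toward $q$ and walk through $G$ from ellipsoid to ellipsoid, at each step moving to a neighbor whose (dilated) ellipsoid the ray enters next; we stop when either we reach $q$ (answer ``Yes'', since $q$ then lies in some $E^{c\lambda_0}(x)\subset K_\delta'$ for a parameter still $\geq(1-\eps)\delta$ by a continuity/monotonicity estimate on depth along the Macbeath scale) or the ray exits all ellipsoids before reaching $q$, in which case the last exit point furnishes a separating cap certifying $q\notin K_{(1-\eps)\delta}$. The query time is governed by the number of ellipsoids crossed: by Lemma~\ref{L:HilbMac} each step advances the ray by at least a constant Hilbert distance along a Hilbert geodesic (straight lines are geodesics), the total Hilbert length inside $K_\delta$ from $x_\delta$ to the exit is at most $2R_\delta = O(\ln\tfrac1\delta)$ by Corollary~\ref{C:bound}, and --- accounting for the $\eps$-refined ellipsoids near the boundary, whose Hilbert radii shrink like $\eps\delta$ --- the count becomes $O\!\left(\tfrac1{\eps\delta}\ln\tfrac1\delta\right)$, each step costing $O(1)$ via the bounded-degree graph.

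The main obstacle, and the heart of the argument, is the passage between the \emph{Hilbert-geometric} control (disjoint balls, bounded diameter, volume growth --- all clean because of Lemmas~\ref{L:DilatesHilb}--\ref{L:HilbMac}, Corollary~\ref{C:bound}) and the \emph{depth-theoretic} guarantee demanded by the $\eps$-AMQ: we must show that the union of chosen ellipsoids sits between $K_\delta$ and $K_{(1-\eps)\delta}$, i.e. that dilating a Macbeath ellipsoid by a controlled factor, or equivalently moving a controlled Hilbert distance outward from $K_\delta$, decreases the depth by at most a factor $(1-\eps)$. This requires a quantitative ``Hölder''-type estimate for $D_K$ in the Hilbert metric near level $\delta$ --- plausibly extracted from Theorem~\ref{T:FB-HB} applied at levels $\delta$ and $(1-\eps)\delta$ (whose Hilbert balls differ in radius by $\tfrac12\ln\tfrac1{1-\eps}\approx\tfrac\eps2$) --- and it is what forces the $\eps$-dependent refinement of the Delone set and hence the precise exponents in the stated bounds. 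Getting the constants to depend only on $d$ throughout, despite $\delta$ and $\eps$ both tending to $0$, is the bookkeeping crux.
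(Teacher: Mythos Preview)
Your overall architecture---Delone set of Macbeath ellipsoids in $K_\delta$, intersection graph, ray-shooting walk from the root $x_\delta$, Hilbert-metric analysis via Lemmas~\ref{L:VolBound}--\ref{L:ellips} and Corollary~\ref{C:bound}---matches the paper exactly. The genuine gap is your choice of $\lambda_0$ as a \emph{dimensional constant}. In the paper, the packing and covering parameters are taken to be $\lambda_p=\dfrac{\eps\,\de\,|K|}{2(d^4+1)}$ and $\lambda_c=\dfrac{\eps\,\de\,|K|}{4}$, i.e.\ of order $\eps\de$. This single scaling choice resolves everything you found mysterious. First, it is precisely what forces $E^{\lambda_c}(z)\subset K_{(1-\eps)\de}$ for every $z\in K_\de$: by Lemma~\ref{L:HilbMac} one has $E^{\lambda_c}(z)\subset B_K(z,\eps\de|K|/2)$, and a short Euclidean estimate (using $\mathrm{diam}\,K\le 1$) shows that any point within Euclidean distance $\eps\de|K|$ of $K_\de$ still has depth at least $(1-\eps)\de$. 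No appeal to Theorem~\ref{T:FB-HB}, no separating-cap witnesses, and no boundary refinement are needed; the ``main obstacle'' you identify dissolves once $\lambda_c$ is scaled correctly. Second, the same scaling immediately gives the stated bounds: each packing ellipsoid has Busemann volume $\Omega(\lambda_p^d)=\Omega((\eps\de)^d)$, and $K_\de$ has Busemann volume $O(\de^{-(d-1)/2})$ by Corollary~\ref{C:bound} and Lemma~\ref{L:VolBound}, so $\#X=O\big((\eps\de)^{-d}\de^{-(d-1)/2}\big)=O\big(\eps^{-d}\de^{-(3d-1)/2}\big)$; and each ray-step advances at least $\Omega(\lambda_p)=\Omega(\eps\de)$ in Hilbert distance along a total length $O(\ln(1/\de))$, giving the query bound.

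Your attempt to recover the exponents via ``local cap-witness structures'' or a two-scale refinement near $bK_\de$ is therefore headed in the wrong direction: with constant $\lambda_0$ the covering ellipsoids will in general \emph{not} stay inside $K_{(1-\eps)\de}$ (their Hilbert radius is of order $1$, whereas the Hilbert gap between $K_\de$ and $K_{(1-\eps)\de}$ is only of order $\eps$ near the boundary---indeed, even smaller in the $\de$ direction), so the ``Yes'' answer would be unsound, and no amount of auxiliary data repairs that. The degree bound and the correctness argument for the ``No'' case are as you sketch (bounded degree via Lemma~\ref{L:ellips} and a volume comparison; ``No'' is safe because a hard cap on the iteration count is imposed and Lemma~\ref{L:QT} shows any $q\in K_\de$ is reached before the cap), but both rely on the $\lambda\sim\eps\de$ scaling throughout.
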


Assuming the above result for the moment, we describe the data structure and algorithm claimed in Theorem~\ref{T:ADQ}. We will prove Theorem~\ref{T:AMQ} in the next section.

Let $l\in\mathbb{Z}_{\geq 0}$ be the largest non-negative integer satisfying 
\bes \frac{1}{2}(1-\eps)^l>\eps.
\ees
Since $\eps<1/3,$ we have that $l\geq 1$ and $l\leq\log_{\,1-\eps}2\eps.$ Let $m=l+1.$ Define, for $1\leq j\leq m,$
\[ \de_j=\begin{cases}
        \dfrac{1}{2}(1-\eps)^j, \hspace{2mm}\text{if} \hspace{1mm} 1\leq j<m,\\
        \eps, \hspace{2mm}\text{if}\hspace{1mm} j=m.
        \end{cases}\]
The data structure stores 
\begin{itemize}
\item [(a)] the ordered list $(1,...,m)$,
        \item [(b)] at each $j$, $1\leq j\leq m$, the information about whether $K_{\de_j}$ is empty or not,
    \item [(c)] at each $j$, $1\leq j\leq m$, if $K_{\de_j}\neq \emptyset$, the data structure (granted by Theorem~\ref{T:AMQ}) that answers the $\eps$-AMQ for the $\delta_j$-DTR of $K$.
\end{itemize}
Using that $l\leq\log_{\,1-\eps}2\eps,$ we have that the storage space of this data structure is
\[O\left(m+\sum_{j=1}^m\frac{1}{\eps^d\,\de_j^{\frac{3d-1}{2}}}\right)=O\left(\frac{1}{\eps^{\frac{5d+1}{2}}}\right).\]

We employ a binary search algorithm and rely on the monotonicity of $K_{\de_j}$ in $j.$ Define the \emph{midpoint} of a finite ordered list of $m$ elements as the element at position \( \lceil m/2 \rceil \). Let $q\in\R^d$ be a query point. Given $a\leq b$ in $\{1,...,m\}$, consider the following procedure.


\begin{itemize}
    \item [(i)] Set $x\leftarrow$ midpoint of the ordered list $(a,...,b)$.
   \item[(ii)] If $x=m$, stop and declare $\mathfrak{D}(q)=\de_m=\eps$.
   \item [(iii)] If $x<m$, $K_{\de_x}$ is empty, and $x=b$, stop and declare $\mathfrak{D}(q)=\de_{x+1}$.
   \item [(iv)] If $x<m$, $K_{\de_x}$ is empty, and $x<b$, set $a\leftarrow x+1,$ {and keep $b$ as is.}
   \item [(v)] If $x<m$, and $K_{\de_x}$ is nonempty, answer the $\eps$-AMQ for the $\de_x$-DTR of $K$ for $q$. 
    \begin{itemize}
        \item  [(a)] If the answer is ``No'' and $x=b$, stop and declare $ \mathfrak{D}(q)=\de_{x+1}$.
         \item  [(b)] If the answer is ``No'' and $x<b$, set $a\leftarrow x+1,$ and keep $b$ as is.
          \item  [(c)] If the answer is ``Yes'' and $x=a$, stop and declare $\mathfrak{D}(q)=\de_x$.
           \item  [(d)] If the answer is ``Yes'' and $x>a$, set $b\leftarrow x-1,$ {and keep $a$ as is.}
    \end{itemize}
    \end{itemize}
Our algorithm consists of running the above {procedure} with $a\leftarrow 1$ and $b\leftarrow m$, and reiterating it with the new values of $a$ and $b$ if the {procedure} does not stop {and declare $\mathfrak{D}(q).$}  It is clear that the algorithm will necessarily {terminate} after a finite number of {iterations}. We demonstrate this with {a specific example.}

Suppose $m=5$. In the first iteration, $a\leftarrow 1$, $b\leftarrow 5$ and $x\leftarrow 3$. Since {$3<5$}, the algorithm checks whether $K_{\delta_3}$ is empty or not. Suppose it is. By (iv), {the algorithm} sets $a\leftarrow 4,$ and keeps $b$ as is. In the second iteration, $x\leftarrow4$. Since {$4< 5$}, the algorithm checks whether $K_{\de_4}$ is empty or not. Suppose it is not. By (v), the algorithm answers the $\eps$-AMQ for the $\de_4$-DTR of $K$ for $q.$ Suppose the response is ``Yes''. Then, since $x=a=4$, by (c), the algorithm {terminates} and declares $\mathfrak{D}(q)=\de_4$. In the final step above, suppose the {response is} ``No''. Then, (b) applies and the algorithm sets $a\leftarrow 5,$ and keeps $b$ as is. In the next iteration, $x\leftarrow 5$. Since $x=m$ in this case, (ii) applies and $\mathfrak{D}(q)$ is declared as $\eps$.

We now check for correctness.
  
\begin{enumerate}
 
 \item[(i)] If $D_K(q) \leq \varepsilon$, then at each iteration, either $K_{\delta_x}$ is empty or the algorithm reports ``No'', continuing until the final stage where $x = m.$ At this point, by (ii), the algorithm terminates and declares $\mathfrak{D}(q)=\eps.$
        \item[(ii)]If $D_K(q)>\eps,$ there exists a smallest index $j\in \left\{1,..., m\right\}$ for which $K_{\de_j}\neq\emptyset$ and the $\eps$-AMQ for the $\delta_{j}$-DTR of $K$ returns a ``Yes'' for $q$. In this case, the algorithm declares $\mathfrak{D}(q)=\de_{j}$, and the affirmative answer to the membership query says that $D_K(q)\geq (1-\eps)\,\de_j$. Due to the minimality of $j$, either $j=1$, or if not, 
        $K_{\de_{j-1}}$ is empty, or the $\eps$-AMQ for the $\delta_{j-1}$-DTR of $K$ returns a ``No'' for $q$.  In each of these cases, we have that $D_K(q)\leq (1-\eps)^{-1}\de_j$. Thus,
        \[(1-\eps)\,D_K(q)\leq \mathfrak{D}(q)\leq (1-\eps)^{-1}D_K(q).\]
    \end{enumerate}
Thus, the algorithm correctly answers the $\eps$-ADQ for $K$. The worst-case query time occurs when, at each iteration, the {algorithm} reports ``No''. In this case, the query time is
\[O\left(\sum_{k=1}^{{\lceil\log_2m\rceil}}\frac{1}{\eps\,\de_{j_k}}\ln\frac{1}{\de_{j_k}}\right),\]
where $j_k=\left\lceil\left(1-\dfrac{1}{2^k}\right)m\right\rceil$ for each $1\leq k\leq {\lceil\log_2m\rceil.}$
Using that $l\leq\log_{\,1-\eps}2\eps,$ a simple calculation shows that the query time is
\[O\left(\frac{1}{\eps^2}\left(\ln\frac{1}{\eps}\right)^2\right),\]
thus establishing Theorem~\ref{T:ADQ}, modulo Theorem~\ref{T:AMQ}. 

\section{An algorithm for answering the $\boldsymbol{\eps}$-AMQ for the $\boldsymbol{\de}$-DTR of $K$}\label{SS:DOA} Since depth-trimmed regions are affine invariant, we henceforth assume that $K$ is in canonical form, with $\de\in (0, 1/2]$ sufficiently small such that $K_{\de}\neq \emptyset,$ and $\eps\in (0,1).$ Recall that $E^{\lambda}(x)$ denotes the $\lambda$-Macbeath ellipsoid at $x.$ 

Given  constants $\lambda_p, \lambda_c>0,$ and a point $x\in K,$ define $E_p(x)=E^{\lambda_p}(x)$ and $E_c(x)=E^{\lambda_c}(x).$ A {\em$(\lambda_p, \lambda_c)$-Delone set}, or simply, a {\em Delone set} in $K_{\de}$ is a set of points $X=X(\de, \lambda_p, \lambda_c)\subset K_{\de}$ such that the following hold. 
\begin{enumerate}
\item[(a)] $X$ is {\em $\lambda_p$-packing}, i.e.,  $E_p(x)\cap E_p(y)=\emptyset$ for distinct points $x,y \in X,$ and for any $w\in K_{\de},$ there exists $z\in X$ such that $E_p(w)\cap E_p(z)\neq \emptyset.$
\item[(b)] $X$ is {\em $\lambda_c$-covering}, i.e., 
\[K_{\de}\subset \bigcup_{x\in X} E_c(x).\]
\end{enumerate}
The collections $\{E_p(x):x\in X\}$ and $\{E_c(x):x\in X\}$ will be referred to as the set of packing ellipsoids and covering ellipsoids, respectively, of $X$.

We rely on the following existence result. 

\begin{lemma}
\label{L:existenceDS}  Let $\lambda_p, \lambda_c\in (0,1)$ be such that
\be\label{E:lampc}
\frac{3+\lambda_p}{1-\lambda_p}\sqrt{d}\lambda_p<\lambda_c\leq\frac{e^{\eps\,\de\,|K|}-1}{e^{\eps\,\de\,|K|}+1}.
\ee
There exists a $(\lambda_p,\lambda_c)$-Delone set $X$ in $K_\de$ such that 
    \be\label{E:impinc}
      K_\delta\subset  \bigcup_{x\in X}E_c(x)\subset K_{(1-\eps)\,\de}. 
    \ee
In particular, there exists a $(\lambda_p,\lambda_c)$-Delone set $X_0$ in $K_\de$ corresponding to 
\bes 
\lambda_p=\dfrac{\eps\,\de\,|K|}{2(d^4+1)} \quad \text{and}\quad  \lambda_c=\dfrac{\eps\,\de\,|K|}{4},
\ees
that satisfies \eqref{E:impinc}.
\end{lemma}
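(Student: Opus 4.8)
The plan is to construct the Delone set $X$ greedily, ensuring the packing property by maximality, the covering property via the expansion-containment Lemma~\ref{L:ellips}, and the outer inclusion $\bigcup E_c(x)\subset K_{(1-\eps)\de}$ via Lemma~\ref{L:HilbMac} together with Lemma~\ref{L:DilatesHilb} or a direct comparison of caps. First I would fix $\lambda_p,\lambda_c$ satisfying \eqref{E:lampc} and pick $X$ to be a maximal subset of $K_\de$ with the property that the packing ellipsoids $\{E_p(x):x\in X\}$ are pairwise disjoint; such a maximal set exists (e.g.\ by Zorn's lemma, or one notes $X$ must be finite since each $E_p(x)$ has volume bounded below on the compact set $K_\de$, using that Macbeath ellipsoids at points of $K_\de$ are not too small). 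Maximality immediately gives part (a): for any $w\in K_\de$, if $E_p(w)$ were disjoint from every $E_p(z)$, $z\in X$, we could add $w$ to $X$, contradicting maximality; hence $E_p(w)\cap E_p(z)\neq\emptyset$ for some $z\in X$.

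Next I would establish the covering property (b) and the left inclusion in \eqref{E:impinc}. Let $w\in K_\de$. By part (a), there is $z\in X$ with $E_p(w)\cap E_p(z)=E^{\lambda_p}(w)\cap E^{\lambda_p}(z)\neq\emptyset$. Apply Lemma~\ref{L:ellips} with $x=z$, $y=w$, $\lambda=\lambda_p$, and $\alpha=1$: this gives $E^{\lambda_p}(w)\subset E^{\beta\lambda_p}(z)$ with $\beta=\frac{2+(1+\lambda_p)}{1-\lambda_p}\sqrt d=\frac{3+\lambda_p}{1-\lambda_p}\sqrt d$. In particular $w\in E^{\lambda_p}(w)\subset E^{\beta\lambda_p}(z)$. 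By the first inequality in \eqref{E:lampc}, $\beta\lambda_p<\lambda_c$, and since Macbeath ellipsoids $E^{t}(z)$ are nondecreasing in $t$, we get $w\in E^{\lambda_c}(z)=E_c(z)$. As $w\in K_\de$ was arbitrary, $K_\de\subset\bigcup_{x\in X}E_c(x)$, proving (b).

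For the right inclusion $\bigcup_{x\in X}E_c(x)\subset K_{(1-\eps)\de}$, I would work one ellipsoid at a time: fix $x\in X\subset K_\de$ and show $E_c(x)=E^{\lambda_c}(x)\subset K_{(1-\eps)\de}$. The natural route is to use the upper bound $E^{\lambda_c}(x)\subset B_K\!\left(x,\tfrac12\ln\tfrac{1+\lambda_c}{1-\lambda_c}\right)$ from Lemma~\ref{L:HilbMac}, so it suffices to show that every $y$ with $d_K(x,y)\le \tfrac12\ln\tfrac{1+\lambda_c}{1-\lambda_c}$ satisfies $D_K(y)\ge(1-\eps)\de$. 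This is where the second inequality in \eqref{E:lampc} enters: it is equivalent to $\tfrac12\ln\tfrac{1+\lambda_c}{1-\lambda_c}\le\tfrac12\eps\,\de\,|K|$, so $d_K(x,y)\le \tfrac12\eps\,\de\,|K|$. The key quantitative step is then a Lipschitz-type estimate showing that a Hilbert-metric displacement of $x$ by $t$ changes the volume of any cap through the point by a factor controlled by $t$; concretely, for any unit vector $v$, the cap $\{z\in K:\langle v,z-y\rangle\ge0\}$ has volume at least (roughly) $e^{-2t}|K|$ times the corresponding quantity at $x$, or more directly at least $|\{z\in K:\langle v,z-x\rangle\ge0\}|-\Theta(t)\,|K|$. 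Using $D_K(x)\ge\de$ and absorbing constants into the precise constant $2(d^4+1)$-type denominators, one concludes $D_K(y)\ge\de-\eps\de=(1-\eps)\de$, i.e.\ $y\in K_{(1-\eps)\de}$.

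The main obstacle is exactly this last quantitative cap-volume comparison under Hilbert displacement: one must show that moving a point a small Hilbert distance $t$ inside $K$ cannot decrease the minimal normalized cap volume by more than $O(t)$ (with the correct dimensional constant, so that the stated explicit choice $\lambda_p=\eps\de|K|/(2(d^4+1))$, $\lambda_c=\eps\de|K|/4$ works and indeed satisfies \eqref{E:lampc}). I expect this to follow by fixing the worst hyperplane direction at $y$, translating it to pass through $x$, and bounding the symmetric difference of the two caps by a slab of $K$ of width comparable to $\|x-y\|$, then converting the Euclidean displacement bound into a Hilbert one via Lemma~\ref{L:DilatesHilb} (which controls $\|x-y\|$ in terms of $d_K(x,y)$ and $\gamma$, here $\gamma=1/(2d)$ by canonical form) and controlling the slab volume via the John-ellipsoid sandwich $\mathbb B(o,1/(2d))\subset K\subset\mathbb B(o,1/2)$. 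Finally, to get the ``in particular'' clause one just checks that the explicit pair $\lambda_p=\eps\de|K|/(2(d^4+1))$, $\lambda_c=\eps\de|K|/4$ satisfies both inequalities in \eqref{E:lampc} for all admissible $\eps,\de$ — the lower bound because $\tfrac{3+\lambda_p}{1-\lambda_p}\sqrt d\le 4\sqrt d\le 2(d^4+1)/4$ type crude estimates hold once $\lambda_p$ is this small, and the upper bound because $\lambda_c=\eps\de|K|/4\le\tfrac{e^{\eps\de|K|}-1}{e^{\eps\de|K|}+1}$ follows from the elementary inequality $\tanh(s/2)\ge s/4$ for $s\in(0,\text{small}]$, with $s=\eps\de|K|\le 1/2\cdot1/2\cdot\omega_d/2^d$ small.
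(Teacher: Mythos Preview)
Your approach is essentially the paper's: take $X$ to be a maximal $\lambda_p$-packing, obtain the covering property from Lemma~\ref{L:ellips} with $\alpha=1$, and for the outer inclusion use Lemma~\ref{L:HilbMac} to place $E_c(z)$ inside $B_K(z,\eps\de|K|/2)$, then show this Hilbert ball lies in $K_{(1-\eps)\de}$ by converting to a Euclidean bound $\|z-w\|<\eps\de|K|$ and controlling the slab between the two parallel hyperplanes. The only difference is that the paper does the Hilbert-to-Euclidean step directly from the definition of $d_K$ (using that all chords of $K\subset\mathbb B(o,1/2)$ have length at most $1$, so $d_K(z,w)\ge\ln(1+\|z-w\|)$) rather than via Lemma~\ref{L:DilatesHilb}, and bounds the slab volume by $\|z-w\|$ using that cross-sections of $K$ have $(d{-}1)$-area at most $1$; this yields the clean additive loss $\de|K|-\eps\de|K|$ without any constant-juggling.
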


Assuming this result for the moment, we describe our data structure and algorithm.

\subsection{The data structure and the algorithm} Let $X_0$ be as granted by Lemma~\ref{L:existenceDS}. The data structure consists of 
\begin{itemize}
 \item [(a)] the intersection graph, $G$, of the covering ellipsoids $\{E_c(x):x\in X_0\}$, whose vertex set is $X_0,$ and two distinct vertices $x,y\in X_0$ are adjacent in $G$ if and only if $E_c(x)\cap E_c(y)\neq \emptyset;$
    \item [(b)] for each vertex $x\in X_0,$ the coordinates of $x$ and the equation of its associated covering ellipsoid $E_c(x);$
    \item [(c)] the coordinates of the ``root''---an $x_{\delta}$ granted by Corollary~\ref{C:bound}.
\end{itemize} 

\begin{figure}[h]
\begin{overpic}[grid=false,tics=10,scale=0.75]{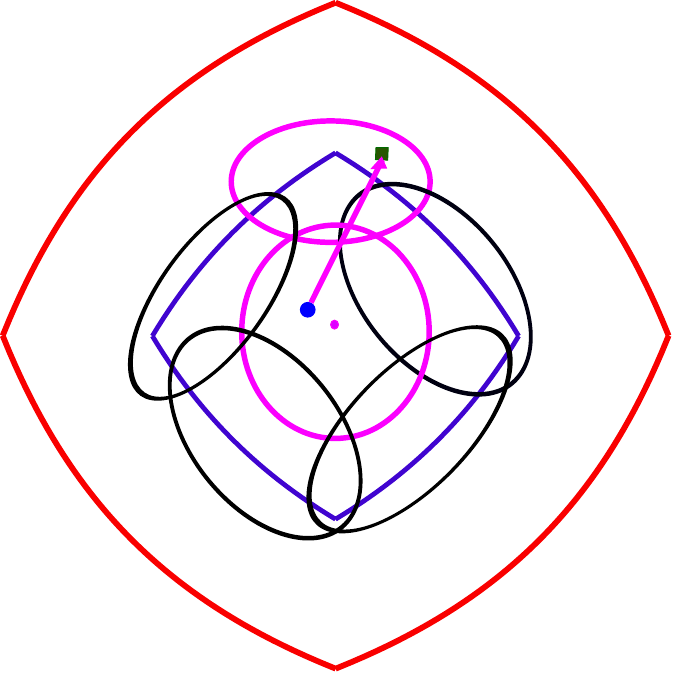}
    \put(41,50){$x_{\de}$}
    \put(50,48){$x_0$}
    \put(52,78){$q$}
    \put(28,54){$K_{\de}$}
    \put(42,91){$K_{(1-\eps)\,\de}$}
    \end{overpic}
    \caption{Search procedure for answering the $\eps$-AMQ for the $\de$-DTR of $K.$}
\label{fig:Algo}
\end{figure}

Let $x_0\in X_0$ be such that $x_{\de}\in E_c(x_0).$ To answer the $\eps$-AMQ for the $\de$-DTR of $K$ for a point $q\in\mathbb{R}^d,$ we walk along the ray starting at $x_{\de}$ and passing through $q$; see Figure~\ref{fig:Algo}. We start by setting $x\leftarrow x_0.$ If $q\in E_c(x),$ by \eqref{E:impinc}, we know that $q\in K_{(1-\eps)\,\de}$, and it is safe to report ``Yes.'' Otherwise, we inspect the vertices adjacent to $x$ in $G.$ If it is impossible to travel along the ray by passing through one of the covering ellipsoids associated with these vertices, then we stop and report ``No.'' Else, we determine the vertex $x_1$ adjacent to $x$ in $G$ for which the ray can travel the farthest through $E_c(x_1).$ We set $x\leftarrow x_1$ and repeat the above process. We set the maximum number of iterations (equivalently, the maximum number of ellipsoids visited during the search) to $N_0=\dfrac{d^{10d}}{\eps\,\de} \ln\dfrac{1}{\de}.$ If the search procedure does not yield a result after $N_0$ iterations, we terminate the search and report ``No.''  By \eqref{E:impinc}, this procedure reports a ``No'' if $q\notin K_{(1-\eps)\,\de}$.
In the next subsection, we show that it reports a ``Yes'' if $q\in K_{\de},$ thereby correctly answering the $\eps$-AMQ for the $\de$-DTR of $K.$

We now prove the existence result stated in Lemma~\ref{L:existenceDS}.

\noindent {\em Proof of Lemma~\ref{L:existenceDS}.} 
    Let $X$ be any $\lambda_p$-packing in $K_{\de}.$ We claim that $X$ is the desired Delone set. To see that it is $\lambda_c$-covering, take any point $y\in K_{\de}.$ Since $X$ is $\lambda_p$-packing, there exists $x\in X$ such that
\[E_p(x)\cap E_p(y)\neq \emptyset.\]
Applying Lemma~\ref{L:ellips} with $\alpha=1,$ $\lambda=\lambda_p$, and $\beta=\dfrac{3+\lambda_p}{1-\lambda_p}\sqrt{d},$ we have that 
\bes 
y\in E_p(y)\subset E^{\beta \lambda_p}(x)\subset E_c(x),
\ees
where the last containment follows from \eqref{E:lampc}.  Thus, $X$ is a $(\lambda_p, \lambda_c)$-Delone set in $K_{\de}.$ Let $z\in X$ be arbitrary. By \eqref{E:lampc} and Lemma~\ref{L:HilbMac}, $E_c(z)\subset B_K(z, \eps\,\de\,|K|/2).$ We claim that
\[B_K\left(z, \frac{\eps\,\de\,|K|}{2}\right)\subset K_{(1-\eps)\,\de}.\]
Then \eqref{E:impinc} follows. Let $w\in B_K(z, \eps\,\de\,|K|/2)$ be arbitrary. Note that
\[\frac{\eps\,\de\,|K|}{2}> \frac{1}{2}\ln\left(\frac{\|w-a\|\|z-b\|}{\|z-a\|\|w-b\|}\right)=\frac{1}{2}\ln\left(\left(1+\frac{\|z-w\|}{\|z-a\|}\right)\left(1+\frac{\|z-w\|}{\|w-b\|}\right)\right),\]
where $a, b\in bK$ denote the points where the line joining $z$ and $w$ meets $bK$ so that $a, z, w$ and $b$ are in consecutive order. Since $K$ is in canonical form, it follows that
\[\frac{\eps\,\de\,|K|}{2}>\ln(1+\|z-w\|).\]
Thus, $\|z-w\|<\eps\,\de\,|K|.$ Let $H_+$ be any arbitrary closed halfspace whose boundary hyperplane $H$ contains $w.$ Let $H_+'$ be the translate of $H_+$ whose boundary hyperplane $H'$ contains $z.$ Since $z
\in K_{\de},$ we have that
\[|K\cap H_+'|\geq \de\,|K|.\]
There are two possible cases. First, suppose that  $|K\cap H_+|\geq |K\cap H_+'|.$  Then, it immediately follows that $|K\cap H_+| \geq \de\,|K|.$  Next, suppose that $|K\cap H_+|<|K\cap H_+'|.$ Since $K$ is in canonical form and $\|z-w\|<\eps\,\de\,|K|$, we have that
\[ |K\cap H_+'|<|K\cap H_+| + \eps\,\de\,|K|.\]
Thus, in both cases, \[|K\cap H_+|> (1-\eps)\,\de\,|K|.\]
Since $H_+$ was arbitrarily chosen, it follows that $w\in K_{(1-\eps)\,\de}.$ Thus,
\[B_K\left(z, \frac{\eps\,\de\,|K|}{2}\right)\subset K_{(1-\eps)\,\de},\]
and  \eqref{E:impinc} holds. Finally, note that if $\lambda_p=\dfrac{\eps\,\de\,|K|}{2(d^4+1)}$ and $\lambda_c=\dfrac{\eps\,\de\,|K|}{4},$ 
\bes
    \frac{3+\lambda_p}{1-\lambda_p}\sqrt{d}\lambda_p
    \leq \left(\frac{3d^4+4}{d^4+1}\right)\frac{\eps\,\de\,|K|}{2d^3\sqrt{d}}< \frac{\eps\,\de\,|K|}{4}=\lambda_c\leq\frac{e^{\eps\,\de\,|K|}-1}{e^{\eps\,\de\,|K|}+1}.
\ees
    Thus, $\lambda_p, \lambda_c\in (0,1)$ satisfy \eqref{E:lampc} and the final claim in Lemma~\ref{L:existenceDS} follows.
\qed

\subsection{Analysis: correctness}
We show that the search procedure reports a ``Yes'' if $q\in K_{\de},$ thereby correctly answering the $\eps$-AMQ for the $\de$-DTR of $K.$

 \begin{lemma} \label{L:QT} Given a query point $q\in K_{\de},$ the search procedure will report ``Yes.''
 \end{lemma}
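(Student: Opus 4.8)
The plan is to show that the greedy ray-walk, started at the vertex $x_0$ covering the root $x_\de$ and directed at a query point $q\in K_\de$, terminates in the ``Yes'' branch before exhausting the iteration budget $N_0$. The key is to track the \emph{Hilbert distance} along the query ray, rather than Euclidean distance, since the Hilbert metric is the natural scale at which the Macbeath/covering ellipsoids live. There are two quantities to control: (i) how much Hilbert length the ray can traverse before reaching $q$, which is bounded above since $q\in K_\de$ and, by Corollary~\ref{C:bound} (applied with $x_\de$ as the root), $K_\de\subset B_K(x_\de,\tfrac12\ln\tfrac1\de+24d^3)$; and (ii) how much Hilbert progress the walk makes at \emph{each} step, which I want to bound \emph{below} by some quantity of order $\eps\,\de$ (up to a dimensional constant), using the packing property.

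First I would argue that the walk never gets ``stuck'' before reaching $q$, i.e. step (v) always finds a covering ellipsoid through which the ray can continue, as long as the current point lies strictly between $x_0$ and $q$ on the ray. The point here is that the whole segment $[x_\de,q]$ lies in $K_\de$ (both endpoints do, and $K_\de$ is convex by Lemma~\ref{L:FB}(c)), so by the covering property \eqref{E:impinc} every point on this segment lies in some $E_c(x)$ with $x\in X_0$; in particular the point where the ray exits the \emph{current} covering ellipsoid $E_c(x)$ lies in $E_c(y)$ for some $y\in X_0$, and then $E_c(x)\cap E_c(y)\neq\emptyset$, so $y$ is a neighbour of $x$ in $G$ and the ray does penetrate $E_c(y)$ past the current exit point. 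Hence the ``report No'' clause in the non-terminal part of step (v) is never triggered along the way to $q$; the only way the walk can fail to say ``Yes'' is to run out of iterations.

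Next, the core estimate: each step advances the ray by at least $c_d\,\eps\,\de$ in Hilbert length, for a dimensional constant $c_d$. Here is the mechanism I would use. When we move from $x$ to the next vertex $x_1$, the ray has exited $E_c(x)$ and is inside $E_c(x_1)$; since $x_1$ is chosen to maximize the exit parameter among neighbours, the exit point of $E_c(x_1)$ is at least as far as the exit point of $E_c(x)$, so the ray gains at least the ``Hilbert width'' that $E_c(x_1)$ adds beyond $E_c(x)$'s exit point. To lower-bound this I want a uniform \emph{lower} bound on the Hilbert inradius of the packing ellipsoids, together with the fact that consecutive packing ellipsoids are Hilbert-separated: by the $\lambda_p$-packing property $E_p(x)\cap E_p(x_1)=\emptyset$, and by Lemma~\ref{L:HilbMac} each $E_p$ contains a Hilbert ball of radius $\tfrac12\ln(1+\lambda_p/\sqrt d)$, so the centers $x,x_1$ are at Hilbert distance at least $\ln(1+\lambda_p/\sqrt d)\ge c\,\lambda_p \asymp \eps\,\de\,|K|$ apart (using $\lambda_p = \eps\de|K|/(2(d^4+1))$ and $|K|\ge \omega_d/(2d)^d$ in canonical form, so $\lambda_p \asymp_d \eps\de$). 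A standard "accounting over packing balls traversed by a geodesic" argument then converts this into a per-step Hilbert-length gain of order $\eps\,\de$; this is the step I expect to be the main obstacle, since one must be careful that the \emph{greedy} choice of $x_1$ genuinely realizes this gain and that distances measured along the ray (a Hilbert geodesic) are additive, which they are because straight lines are geodesics in the Hilbert metric.

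Finally I would combine the pieces. The total Hilbert length available along $[x_\de,q]$ is at most the Hilbert diameter of $K_\de$ as controlled by Corollary~\ref{C:bound}, namely $O\!\left(\ln\tfrac1\de + d^3\right)$; possibly I also need to account for the short initial overshoot from $x_0$'s ellipsoid back to $x_\de$, which is $O(1)$ by Lemma~\ref{L:HilbMac} since $\lambda_c<1$. Dividing the total length by the per-step gain $\asymp_d \eps\,\de$ gives a bound on the number of steps of order $\dfrac{1}{\eps\,\de}\left(\ln\dfrac1\de + d^3\right)$, which is comfortably below $N_0 = \dfrac{d^{10d}}{\eps\,\de}\ln\dfrac1\de$ once $\de$ is small (and $d^{10d}$ absorbs all dimensional constants as well as the additive $d^3$ term once $\ln\tfrac1\de\ge 1$). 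Hence the walk reaches a covering ellipsoid containing $q$ within $N_0$ iterations and reports ``Yes,'' as claimed.
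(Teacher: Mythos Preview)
Your overall framework is exactly the paper's: measure progress along the query ray in the Hilbert metric, bound the total available Hilbert length by Corollary~\ref{C:bound}, bound the per-step gain from below, and divide. Your ``never stuck'' argument and the final accounting are both fine. The gap is in the per-step lower bound. You try to extract it from the fact that consecutive visited vertices $x_{i-1},x_i\in X_0$ have disjoint packing ellipsoids, hence $d_K(x_{i-1},x_i)\gtrsim\lambda_p$. But this is a statement about the \emph{centers}, not about the \emph{exit points} $y_i,y_{i+1}$ along the ray; the ray need not pass through (or near) either $E_p(x_{i-1})$ or $E_p(x_i)$, and the covering ellipsoids can be traversed by the ray along arbitrarily short chords. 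Center separation alone does not prevent the greedy walk from picking, at each step, a neighbor whose covering ellipsoid the ray grazes. Your ``standard accounting over packing balls traversed by a geodesic'' is not standard here: the geodesic need not traverse the $E_p$'s, and converting center separation into ray progress would require a tube-volume estimate in Hilbert geometry that you do not supply.

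The paper closes this gap by applying the packing property \emph{at the exit point $y_i$} rather than at the visited vertices. Since $y_i$ lies on the segment $[x_\de,q]\subset K_\de$, the $\lambda_p$-packing of $X_0$ gives some $x\in X_0$ with $E_p(y_i)\cap E_p(x)\neq\emptyset$, and then the expansion--containment Lemma~\ref{L:ellips} (as in the proof of Lemma~\ref{L:existenceDS}) yields $E_p(y_i)\subset E_c(x)$. By Lemma~\ref{L:HilbMac} and the choice of $\lambda_p$, $E_p(y_i)\supset B_K\bigl(y_i,\eps\,\de\,|K|/(8d^5)\bigr)$, so $E_c(x)$ contains a Hilbert ball of that radius about $y_i$. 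Since $y_i\in\overline{E_c(x_{i-1})}\cap E_c(x)$, this $x$ is a neighbor of $x_{i-1}$, and the greedy rule forces $d_K(y_i,y_{i+1})\ge \eps\,\de\,|K|/(8d^5)$. Summing along the geodesic ray and invoking Corollary~\ref{C:bound} then gives $k+1\le N_0$. The missing idea in your argument is precisely this: exhibit, at each exit point, a \emph{specific} neighbor whose covering ellipsoid contains a definite Hilbert ball around that exit point, and let the greedy choice do the rest.
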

 \begin{proof}Since the $E_c$ ellipsoids cover $K_{\de}$, it is enough to show that the search procedure for $q$ runs for at most $N_0$ iterations.
 If $q\in E_c(x_0),$ one iteration suffices. If not, let $x_1,..., x_{k}$ denote the vertices of $G$ whose associated covering ellipsoids are visited during the search after exiting $E_c(x_0)$. For $1\leq i\leq k,$ let $y_i$ denote the point where the search ray exits the ellipsoid $E_c(x_{i-1}).$ Since affine segments are geodesics in a Hilbert geometry,
\be\label{E:QT}
d_K(y_1,y_k)=\sum_{i=1}^{k-1}d_K(y_i,y_{i+1}).
\ee
Fix $1\leq i\leq k-1$. Using that $\ln(1+x)\geq x/2$ for $x\in (0,1)$, Lemma~\ref{L:HilbMac}, and the choice of $\lambda_p,$
\[B_K\left(y_i, \frac{\eps\,\de\,|K|}{8d^5}\right)\subset E_p(y_i).\]
Since $q\in K_{\de}$, by convexity,
it follows that $y_i\in K_{\de}$.  Following
the proof of Lemma~\ref{L:existenceDS}, there exists $x\in X_0$ such that $E_p(y_i)\subset E_c(x).$ Thus, $x$ is adjacent to $x_{i-1}$ in $G$ and we have that
\[d_K(y_i, y_{i+1})\geq \frac{\eps\,\de\,|K|}{8d^5}, \quad \text{for all }1\leq i\leq k-1.\]
By \eqref{E:QT},
\[k+1\leq 2+\frac{8d^5}{\eps\,\de\,|K|}d_K(y_1,y_k)\leq 2+\frac{8d^5}{\eps\,\de\,|K|}d_K(x_{\de}, q),\]
where $x_{\de}$ is the root. By Corollary~\ref{C:bound} and using that $K$ is in canonical form,
\[k+1\leq 2+\frac{8d^5}{\eps\,\de\,|K|}\left(\frac{1}{2}\ln\frac{1}{\de}+24d^3\right)\leq \frac{d^{10d}}{\eps\,\de}\ln \frac{1}{\de}=N_0.\]
Since the number of iterations is $k+1,$ this completes the proof.
\end{proof}

\subsection{Analysis: storage space and query time}
We first show that the degree of each vertex of $G$ is bounded above by a dimensional constant.

\begin{lemma}\label{L:degree} The maximum degree of a vertex of $G$ is $O(1).$ 
\end{lemma}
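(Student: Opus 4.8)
The plan is to fix a vertex $x \in X_0$ and bound the number of $y \in X_0$ with $E_c(x) \cap E_c(y) \neq \emptyset$ by a packing argument in the Hilbert metric. First I would observe that if $E_c(x) \cap E_c(y) \neq \emptyset$, then since $E_c(x) = E^{\lambda_c}(x)$ and $E_c(y) = E^{\lambda_c}(y)$ with $\lambda_c \in (0,1)$, Lemma~\ref{L:HilbMac} gives $E_c(y) \subset B_K(y, R)$ and $E_c(x) \subset B_K(x, R)$ for $R = \tfrac12 \ln\tfrac{1+\lambda_c}{1-\lambda_c}$, so that $d_K(x,y) < 2R$. Using $\lambda_c \leq \frac{e^{\eps\de|K|}-1}{e^{\eps\de|K|}+1}$ from \eqref{E:lampc} one gets $2R \leq \eps\,\de\,|K|$; so all such $y$ lie in the ball $B_K(x, \eps\,\de\,|K|)$. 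Since $\eps\de|K| \le |K|/2$ (as $\eps<1$, $\de\le 1/2$) this radius is bounded by a quantity comparable to $|K|$, which is itself bounded above and below by dimensional constants because $K$ is in canonical form; in particular $\eps\de|K|$ is at most some $O(1)$ constant — call the containing ball $B_K(x, \rho)$ with $\rho = O(1)$, but crucially I will keep the dependence on $\eps\de|K|$ explicit where it matters and only collapse at the end.

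Next I would use the packing property of $X_0$: the ellipsoids $E_p(z) = E^{\lambda_p}(z)$, $z \in X_0$, are pairwise disjoint, and by Lemma~\ref{L:HilbMac} each contains a Hilbert ball $B_K(z, r)$ with $r = \tfrac12\ln(1 + \lambda_p/\sqrt d) \geq \lambda_p/(4\sqrt d)$ (using $\ln(1+t) \geq t/2$ for $t \in (0,1)$). With $\lambda_p = \frac{\eps\de|K|}{2(d^4+1)}$ this gives $r \gtrsim \eps\,\de\,|K|/d^{4.5}$, i.e. $r \geq c_d\,\eps\,\de\,|K|$ for a dimensional constant $c_d$. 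So every neighbor $y$ of $x$ carries a disjoint Hilbert ball $B_K(y, r)$, and by the triangle inequality each such ball is contained in $B_K(x, \rho + r)$. Volume-counting in the Busemann measure then bounds the number of neighbors: by Lemma~\ref{L:VolBound}, $\vol_K B_K(y, r) = \Omega(r^d)$ uniformly, while $\vol_K B_K(x, \rho + r) = O(e^{(d-1)(\rho+r)})$. Hence
\[
\deg_G(x) \;\leq\; \frac{\vol_K B_K(x, \rho + r)}{\min_y \vol_K B_K(y, r)} \;=\; O\!\left(\frac{e^{(d-1)(\rho + r)}}{r^d}\right).
\]

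The final step is to check that the right-hand side is $O(1)$. Here $\rho \leq \eps\de|K|$ and $r \leq \rho$, so $e^{(d-1)(\rho+r)} \leq e^{2(d-1)\eps\de|K|}$; since $\eps < 1$, $\de \leq 1/2$, and $|K| \leq \omega_d (1/2)^d = O(1)$ (as $K \subset \mathbb B(o,1/2)$), the exponent is bounded by a dimensional constant, so the numerator is $O(1)$. The denominator $r^d$ is the only term that can be small, being $\Theta((\eps\de|K|)^d)$ — so naively this bound is \emph{not} $O(1)$ but blows up as $\eps\de \to 0$. The resolution, and what I expect to be the main subtlety, is that one must not throw away the lower bound $\vol_K B_K(x,\rho) = \Omega(\rho^d) = \Omega((\eps\de|K|)^d)$: instead of comparing the small neighbor-balls against a crude exponential upper bound for the big ball, compare them against each other. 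Precisely, the disjoint balls $B_K(y_j, r)$, $y_j$ a neighbor of $x$, all sit inside $B_K(x, 2\rho)$, and since $r$ and $2\rho$ are \emph{comparable} up to a dimensional factor ($2\rho / r \leq 2\rho/(c_d \eps\de|K|) \leq 2/(c_d)$, using $\rho \le \eps\de|K|$), Lemma~\ref{L:VolBound} gives $\vol_K B_K(x, 2\rho) = O((2\rho)^d e^{(d-1)\cdot 2\rho})$ while each $\vol_K B_K(y_j, r) = \Omega(r^d)$, whence the count is $O((2\rho/r)^d e^{2(d-1)\rho}) = O(1)$ since both $2\rho/r$ and $\rho$ are $O(1)$. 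So the key point is that \eqref{E:lampc} was engineered precisely so that $\lambda_p$ and $\lambda_c$ differ only by a dimensional factor, making the packing radius and covering radius comparable; once that is exploited, the exponential volume growth costs only a dimensional constant. I would assemble these estimates, state the comparison ball radii cleanly (perhaps $B_K(x, \eps\de|K|)$ for the neighbors' centers and $B_K(x, 2\eps\de|K|)$ for the disjoint small balls), invoke Lemma~\ref{L:VolBound} twice, and conclude $\deg_G(x) = O(1)$.
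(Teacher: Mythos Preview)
Your overall strategy—pack disjoint Hilbert balls of radius $r \asymp \eps\de|K|$ centered at the neighbors of $x$ into a Hilbert ball of radius $\asymp \eps\de|K|$ around $x$, then count by volume—is sound in spirit, and you correctly identify that the key point is $\lambda_p/\lambda_c = \Theta_d(1)$. But the volume comparison you invoke at the end is not supported by the tools in the paper. Lemma~\ref{L:VolBound} gives only $\vol_K B_K(x,\rho) = O(e^{(d-1)\rho})$, \emph{not} $O(\rho^d e^{(d-1)\rho})$. With the upper bound actually on offer, your ratio is
\[
\frac{\vol_K B_K(x,2\rho)}{\min_y \vol_K B_K(y,r)} \;=\; \frac{O(1)}{\Omega(r^d)} \;=\; O\!\bigl((\eps\de|K|)^{-d}\bigr),
\]
which blows up as $\eps\de\to 0$, exactly the failure mode you flagged as ``naive'' before claiming a stronger bound that Lemma~\ref{L:VolBound} does not provide. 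An upper bound of the form $\vol_K B_K(x,\rho)=O(\rho^d)$ for small $\rho$ is plausible (and true), but it is an additional fact about Hilbert geometry that you would have to prove separately; the paper never states or uses it.

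The paper sidesteps this entirely by working with \emph{Lebesgue} volumes of Macbeath ellipsoids via Lemma~\ref{L:ellips}, rather than Busemann volumes of Hilbert balls. Concretely: if $E_c(x)\cap E_c(y)\neq\emptyset$, apply Lemma~\ref{L:ellips} with $\alpha=\lambda_p/\lambda_c$ and $\lambda=\lambda_c$ to get $E_p(y)\subset E^{\beta\lambda_c}(x)$ for all neighbors $y$, and symmetrically $E_p(x)\subset E^{\beta\lambda_c}(y)$. Since scaled John ellipsoids satisfy $|E^{s}(z)|=(s/t)^d|E^{t}(z)|$, chaining gives $|E_p(y)|\geq(\lambda_p/(\beta\lambda_c))^{2d}|E^{\beta\lambda_c}(x)|$. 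Disjointness of the $E_p(y)$ inside $E^{\beta\lambda_c}(x)$ then yields $\#A(x)\leq(\beta\lambda_c/\lambda_p)^{2d}=O(1)$. This is the same packing idea you had, but executed in Euclidean volume where the scaling is exact, so no small-radius volume estimate is needed.
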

\begin{proof}
Consider any vertex $x$ of $G,$ and let $A(x)$ denote the set of vertices that are adjacent to $x$ in $G,$ i.e., $A(x)=\left\{y\in X_0: E_c(x)\cap E_c(y)\neq \emptyset\right\}.$ In Lemma~\ref{L:ellips}, take $\alpha=\dfrac{\lambda_p}{\lambda_c}$, $\lambda=\lambda_c,$ and $\beta=\dfrac{2+\alpha(1+\lambda_c)}{1-\lambda_c}\sqrt d$ to obtain that
\bes
E_p(y)=E^{\alpha \lambda_c}(y)\subset E^{\beta \lambda_c}(x), \quad \text{for all }y\in A(x).
\ees
Interchanging the roles of $x$ and $y,$ $E_p(x)\subset E^{\beta\lambda_c}(y)$, for all $y\in A(x)$. Next, note that
\bes
\left|E_p(y)\right|
=\left(\frac{\lambda_p}{\beta\lambda_c}\right)^d\left|E^{\beta\lambda_c}(y)\right|
\geq \left(\frac{\lambda_p}{\beta\lambda_c}\right)^d \left|E_p(x)\right|
=\left(\frac{\lambda_p}{\beta\lambda_c}\right)^{2d}\left|E^{\beta\lambda_c}(x)\right|,
\ees
for all $y\in A(x)$. Thus, since $E^{\beta\lambda_c}(x)$ contains the disjoint union of ellipsoids $E_p(y)$, $y\in A(x)$, 
    \bes
        \# A(x)\leq \left(\frac{\beta\lambda_c}{\lambda_p}\right)^{2d}=O(1),
    \ees
where we used that 
$\lambda_p=\dfrac{\eps\,\de\,|K|}{2(d^4+1)}$, $\lambda_c=\dfrac{\eps\,\de\,|K|}{4},$ and $\beta\leq 8\sqrt{d}$.
\end{proof}

The query time is bounded above by the product of the maximum number of ellipsoids visited during the search and the maximum degree of a vertex of $G.$
Due to the bound imposed in the algorithm on the number of iterations (ellipsoid visits) and Lemma~\ref{L:degree}, we conclude that  the query time is $O\left(\dfrac{1}{\eps\,\de} \ln \dfrac{1}{\de}\right)$.

Furthermore, by Lemma~\ref{L:degree}, each vertex stores only a constant amount of additional information about the adjacent vertices (their coordinates and the equation of the associated covering ellipsoid). Thus, it suffices to bound $\#X_0$ to obtain the upper bound on the storage space.

\begin{lemma}\label{L:FinalDelone}$\# X_0 = O\left(\dfrac{1}{\eps^d\,\de^{\frac{3d-1}{2}}}\right).$
\end{lemma}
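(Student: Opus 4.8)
The plan is to bound $\#X_0$ by comparing a disjoint packing of Hilbert metric balls against an upper bound on the Busemann volume of $K_\de$, which in turn comes from the Hilbert-ball containment for $K_\de$ furnished by Theorem~\ref{T:FB-HB} (or Corollary~\ref{C:bound}) together with the volume growth estimate in Lemma~\ref{L:VolBound}.

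First I would observe that, by the choice of $\lambda_p$ in Lemma~\ref{L:existenceDS} and by Lemma~\ref{L:HilbMac}, each packing ellipsoid $E_p(x)$ with $x\in X_0$ contains a Hilbert ball $B_K(x, \rho)$ where $\rho = \tfrac12\ln(1 + \lambda_p/\sqrt d)$, and these balls are pairwise disjoint since the $E_p(x)$ are. Using $\ln(1+t)\ge t/2$ for $t\in(0,1)$ and $\lambda_p = \eps\,\de\,|K|/(2(d^4+1))$, together with the fact that $K$ is in canonical form (so $|K|$ is bounded below by a dimensional constant $\omega_d/(2d)^d$ and above by $\omega_d/2^d$), we get $\rho = \Omega(\eps\,\de)$. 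By the polynomial lower bound in Lemma~\ref{L:VolBound}, $\vol_K B_K(x,\rho) = \Omega(\rho^d) = \Omega\big((\eps\,\de)^d\big)$. Next, since each center $x\in X_0$ lies in $K_\de$, the (full) packing balls need not lie inside $K_\de$, so I would instead work with the slightly enlarged region: every $B_K(x,\rho)$ is contained in $B_K(x_\de, R)$ where, by Corollary~\ref{C:bound} (enlarging the radius by the bounded additive constant $\rho$), $R = \tfrac12\ln\tfrac1\de + O(1)$. Hence, by disjointness,
\[
\#X_0 \cdot \Omega\big((\eps\,\de)^d\big) \le \vol_K B_K(x_\de, R).
\]

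Then I would estimate the right-hand side using the exponential upper bound $\vol_K B_K(x_\de, R) = O(e^{(d-1)R})$ from Lemma~\ref{L:VolBound}. With $R = \tfrac12\ln\tfrac1\de + O(1)$ we get $e^{(d-1)R} = O\big(\de^{-(d-1)/2}\big)$. Dividing,
\[
\#X_0 = O\!\left(\frac{\de^{-(d-1)/2}}{(\eps\,\de)^d}\right) = O\!\left(\frac{1}{\eps^d\,\de^{d + (d-1)/2}}\right) = O\!\left(\frac{1}{\eps^d\,\de^{\frac{3d-1}{2}}}\right),
\]
which is exactly the claimed bound. I would then note that the constant factor absorbed into the $O$ depends only on $d$ (coming from $\omega_d$, the implied constants in Lemma~\ref{L:VolBound}, and the $24d^3$ in Corollary~\ref{C:bound}), as required.

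The main obstacle I anticipate is handling the fact that the packing balls $B_K(x,\rho)$ are centered at points of $K_\de$ but are \emph{not} contained in $K_\de$ — one must be careful that the Busemann volume used on the left (a lower bound on each ball's volume, taken over balls in the larger set $K$) and on the right (an upper bound on the volume of a single Hilbert ball in $K$ that engulfs all of them) are compatible; this is why the argument is routed through the all-enclosing ball $B_K(x_\de,R)$ rather than through $\vol_K K_\de$ directly, though one could equally well enlarge $K_\de$ by a bounded Hilbert-distance collar and bound its volume via the right-hand inclusion in \eqref{E:main_inc}. A secondary technical point is verifying the claimed value $\rho = \Omega(\eps\,\de)$ carefully, tracking that $|K|$ stays within dimensional constants under the canonical-form normalization so that no hidden dependence on $K$ creeps into the constants.
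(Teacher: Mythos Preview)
Your proposal is correct and follows essentially the same route as the paper: the paper also uses Lemma~\ref{L:HilbMac} to put a Hilbert ball of radius $\Omega(\lambda_p)$ inside each disjoint $E_p(z)$, invokes the polynomial lower bound from Lemma~\ref{L:VolBound} on these balls, encloses all of them in the single ball $B_K\big(x_\de,\tfrac12\ln\tfrac1\de+O(1)\big)$ via Corollary~\ref{C:bound} plus the Hilbert radius of $E_p(z)$, and finishes with the exponential upper bound from Lemma~\ref{L:VolBound}. Your anticipated obstacle---that the packing balls need not sit inside $K_\de$---is handled in the paper exactly as you suggest, by routing through the enclosing Hilbert ball rather than $K_\de$ itself.
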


\begin{proof} We claim that
    \bes\#X_0=O\left(\frac{1}{\lambda_p^d}\vol_K B_K\left({x_{\de}}, \frac{1}{2}\ln\frac{1}{\de}+\nu_d\right)\right),
    \ees
    where $x_\de$ is the root and $\nu_d>0$ is a dimensional constant. The result then follows from Lemma~\ref{L:VolBound} and the choice of $\lambda_p.$ To prove the claim, note that 
by Lemma~\ref{L:HilbMac}, for each $z\in X_0,$
\bes
B_K\left(z,\frac{1}{2}\ln\left(1+\frac{\lambda_p}{\sqrt{d}}\right)\right)\subset E_p(z).
\ees
By Lemma~\ref{L:VolBound} and using that $\ln(1+x)\geq x/2$ for $x\in(0,1)$, \be\label{E:ELB} 
\vol_K E_p(z)=\Omega(\lambda_p^d).
\ee
By Corollary~\ref{C:bound} and Lemma~\ref{L:HilbMac},
\bes
E_p(z)\subset
    B_K\left({x_{\de}}, \frac{1}{2}\ln\frac{1}{\de}+24d^3+\frac{1}{2}\ln\frac{1+\lambda_p}{1-\lambda_p}\right),\ees
where $x_{\de}$ is the root.
Since $\lambda_p< 1/2,$ we have that
\bes
E_p(z)\subset
    B_K\left({x_{\de}}, \frac{1}{2}\ln\frac{1}{\de}+\nu_d\right),\ees 
 where $\nu_d>0$ is a dimensional constant. Since $X_0$ is $\lambda_p$-packing, the claim follows from \eqref{E:ELB}.
    \end{proof}
    
\begin{remark}
Optimizing the choice of $\lambda_p,\lambda_c\in(0,1)$ satisfying \eqref{E:lampc} does not improve the bounds on the storage space and query time (in $\eps$ {and $\de$}).
\end{remark}


\section{Exact depth queries for planar polygons}\label{S:EDQ}

        \noindent We show that any open planar convex polygon $K,$ given as the intersection of $n$ halfplanes, can be preprocessed in $O(n\log n)$ time into a data structure of size $O(n)$ that supports an algorithm which answers the exact depth query for $K$ in $O(n^3)$ time, for any query point $q \in \mathbb{R}^2$.
        
Let $K$ be an open planar convex polygon. We will rely on the fact that, for any $q\in K,$
\begin{itemize}
    \item [(a)] $D_K(q)\,|K|$ is the minimum possible area of any cap of $K$ cut off by a chord of $K$ passing through $q;$
    \item [(b)] there is a (possibly non-unique) chord $\ell$ of $K$ that cuts off a cap of $K$ of area $D_K(q)\,|K|$ and has $q$ as its midpoint.
\end{itemize}
This follows from the definition of halfspace depth and Lemma~\ref{L:FB}(b). The core idea of our algorithm is to consider each pair of distinct edges of  $K$, and identify all chords of $K$ that pass through the chosen pair and have $q$ as their midpoint. This is done by solving a constrained system of linear equations, which may have no solution, a unique solution, or multiple solutions. If no pair of edges admits a solution, then the algorithm reports $D_K(q) = 0.$ Otherwise, for each pair of edges where a solution exists, we select one such solution (we will later show that choosing one suffices, even when multiple solutions exist). Each solution corresponds to a chord of $K$, which partitions $K$ into two regions whose areas are computed. This yields a finite list of at most $2\displaystyle\binom{n}{2}$ areas associated to $q$, and the algorithm reports $D_K(q)$ as the least of these areas, normalized by $|K|$. The correctness of the algorithm for any $q\in \R^2$ follows from $(a)$ and $(b)$; we elaborate on the procedure below.

Let $K$ be an open planar convex polygon given as the intersection of $n$ halfplanes. Without loss of generality, assume that none of the boundary lines of these halfplanes is parallel to the $y$-axis. The data structure stores the value of $|K|$ and the edges $e_1,..., e_n$ of $K,$ listed in increasing order of the angle that their outer normal vectors make with the positive $x$-axis. For each $j \in \{1,..., n\}$, let $m_j, c_j, a_j, b_j \in \mathbb{R}$ be such that
\[e_j = \left\{ (x, y) \in \mathbb{R}^2 : y = m_j x + c_j,\ a_j \leq x \leq b_j \right\}.\]
The data structure requires $O(n)$ storage space and can be constructed in $O(n \log n)$ time: this involves sorting the boundary lines to determine their cyclic order, determining each edge from the intersection of adjacent lines, and computing $|K|$ using the shoelace formula.

To describe the algorithm, let
$q=(x_0,y_0)\in\R^2$ be a query point. We repeat the following procedure for each pair $(j,k)$ with $1 \leq j < k \leq n.$ Solve the following system of linear equations in the unknowns $u, v, w,$ and $z$:
    \bea\label{E:SLE}
        \begin{pmatrix}
            1 & 0 & 1 & 0 \\
            0 & 1 & 0 & 1\\
            -m_j & 1 & 0 & 0\\
            0 & 0 & -m_k & 1
        \end{pmatrix}
        \begin{pmatrix}
        u\\ v\\ w\\ z
        \end{pmatrix}=
        \begin{pmatrix}
        2x_0\\ 2y_0\\ c_j\\ c_k
        \end{pmatrix}
    \eea
    subject to the constraints:
    \beas\label{E:const}
        &a_j\leq u\leq b_j,&\\
        &a_k\leq w\leq b_k.&
    \eeas
Note that $(u_0,v_0,w_0,z_0)$ is a solution of the above constrained system if and only if $(u_0,v_0)\in e_j$, $(w_0,z_0)\in e_k$, and $q$ is the midpoint of the line segment joining $(u_0,v_0)$ and $(w_0,z_0)$. Now, there are three possible scenarios.
\begin{itemize}
    \item [(i)] If the constrained system admits no solution, discard the pair $(j, k)$.
    \item [(ii)] If the constrained system admits a unique solution $(u_0, v_0, w_0, z_0)$, compute the equation of the line $\ell$ passing through $(u_0, v_0)$ and $(w_0, z_0)$. Then compute the area of the polygonal region in $K$ bounded by $e_j,e_{j+1},..., e_k,$ and $\ell$, as well as the area of the complementary polygonal region in $K$. Set $v_{j,k}$ to be the minimum of these two areas.
    \item [(iii)] If the constrained system admits multiple solutions, choose any one and label it as $(u_0, v_0, w_0, z_0)$, and proceed as in (ii) to compute $v_{j,k}$. We will later show that all choices of solution yield the same value for $v_{j,k}$.
\end{itemize}
Once the above procedure has been carried out for all pairs $(j, k)$ with $1 \leq j < k \leq n$, if all such pairs are discarded, the algorithm reports $D_K(q) = 0$.  Otherwise, it reports \bes
    D_K(q) = \frac{1}{|K|} \min\{v_{j,k} : 1 \leq j < k \leq n\}.
\ees
The correctness of the algorithm follows from facts (a) and (b) stated at the beginning of the section. To bound the running time, observe that for each pair $(j,k)$, we first solve the constrained system, which takes $O(1)$ time. In the worst case, determining the line $\ell$, computing its intersections with $e_j$ and $e_k$, and performing two area computations using the shoelace formula to obtain $v_{j,k}$, all together take $O(n)$ time.
 Thus, the worst-case query time, including area computations for each pair $(j,k)$ and the final evaluation of $D_K(q)$, is bounded above by
\[
O\left(\binom{n}{2} \cdot n\right) = O(n^3).
\]

\begin{figure}[h]
\begin{overpic}[grid=false,tics=10,scale=0.5]{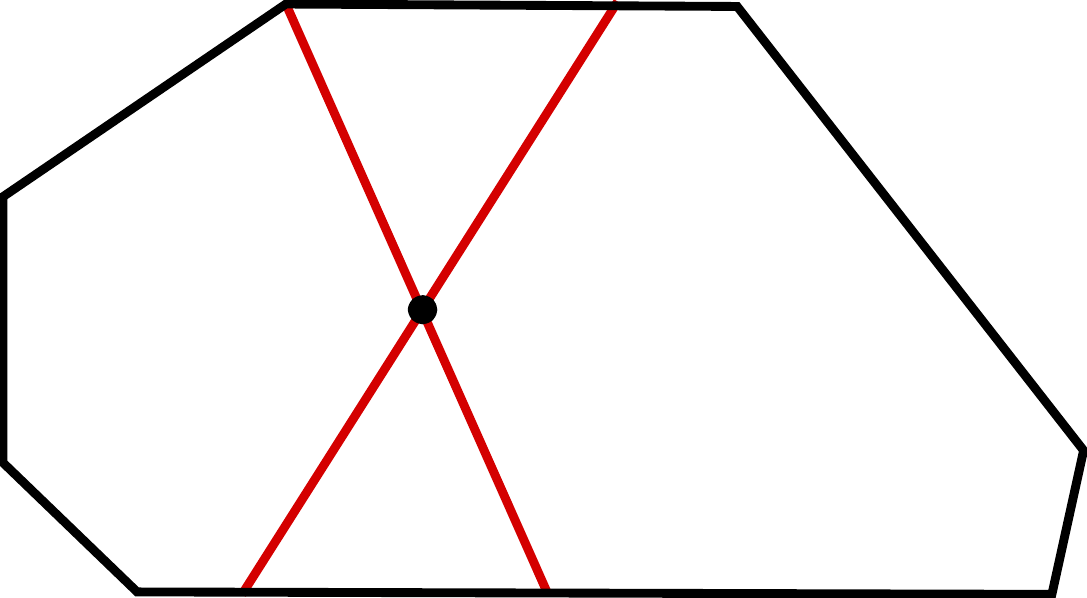}
\put(8,-6){$A$}
\put(96,-5.5){$B$}
\put(102,12){$C$}
\put(68,56.7){$D$}
\put(23,57){$E$}
\put(-6,37){$F$}
\put(-6,10){$G$}
\put(42,25){$q$}
\put(19,-6){$P$}
\put(51,-6){$Q$}
\put(56,57){$R$}
\end{overpic}
\bigskip 
\caption{Two solutions of the constrained system.}\label{F:poly}
\end{figure}

 It remains to show that any choice of solution in (iii) suffices. To prove this, observe that the constrained system admits multiple solutions if and only if the unconstrained system \eqref{E:SLE} admits infinitely many solutions. This, in turn, happens if and only if the $4 \times 4$ matrix on the left-hand side has rank less than $4$---specifically, rank $3$ in this case. Upon row reduction, we find that this occurs if and only if $m_j = m_k$, i.e., when $e_j$ and $e_k$ are parallel. In this case, any two chords of $K$ connecting $e_j$ and $e_k$ and having $q$ as their midpoint divide $K$ in the same proportion. This is evident from Figure~\ref{F:poly}, where $\triangle PqQ$ and $\triangle EqR$ are congruent by SAS congruence. Consequently,
\[
|QBCDE| = |EqR| + |QBCDRq| = |PqQ| + |QBCDRq| = |PBCDR|.
\]
Thus, in (iii), all choices of solution yield the same value for $v_{j,k}.$

\bibliography{AFMQ}{}

\begin{thebibliography}{10}

\bibitem{AbMo18}
A.~Abdelkader and D.~M. Mount.
\newblock Economical {D}elone sets for approximating convex bodies.
\newblock In {\em 16th Scandinavian Symposium and Workshops on Algorithm Theory (SWAT 2018)}. Schloss Dagstuhl-Leibniz-Zentrum fuer Informatik, 2018.

\bibitem{AbMo24}
A.~Abdelkader and D.~M. Mount.
\newblock Convex approximation and the {H}ilbert geometry.
\newblock In {\em 2024 Symposium on Simplicity in Algorithms (SOSA)}, pages 286--298. SIAM, 2024.

\bibitem{AnRa20}
J.~Anderson and L.~Rademacher.
\newblock Efficiency of the floating body as a robust measure of dispersion.
\newblock In {\em Proceedings of the Fourteenth Annual ACM-SIAM Symposium on Discrete Algorithms}, pages 364--377. SIAM, 2020.

\bibitem{AFM17}
S.~Arya, G.~D. da~Fonseca, and D.~M. Mount.
\newblock On the combinatorial complexity of approximating polytopes.
\newblock {\em Discrete \& Computational Geometry}, 58:849--870, 2017.

\bibitem{AFM17b}
S.~Arya, G.~D. da~Fonseca, and D.~M. Mount.
\newblock Optimal approximate polytope membership.
\newblock In {\em Proceedings of the Twenty-Eighth Annual ACM-SIAM Symposium on Discrete Algorithms}, pages 270--288. SIAM, 2017.

\bibitem{AFM18}
S.~Arya, G.~D. da~Fonseca, and D.~M. Mount.
\newblock Approximate polytope membership queries.
\newblock {\em SIAM Journal on Computing}, 47(1):1--51, 2018.

\bibitem{AFM24}
S.~Arya, G.~D. da~Fonseca, and D.~M. Mount.
\newblock Economical convex coverings and applications.
\newblock {\em SIAM Journal on Computing}, 53(4):1002--1038, 2024.

\bibitem{BaLa88}
I.~B{\'a}r{\'a}ny and D.~G. Larman.
\newblock Convex bodies, economic cap coverings, random polytopes.
\newblock {\em Mathematika}, 35(2):274--291, 1988.

\bibitem{BFP82}
J.~L. Bentley, F.~P. Preparata, and M.~G. Faust.
\newblock Approximation algorithms for convex hulls.
\newblock {\em Communications of the ACM}, 25(1):64--68, 1982.

\bibitem{BDGMSYM}
M.~Bumpus, X.~C. Dai, A.~H. Gezalyan, S.~Munoz, R.~Santhoshkumar, S.~Ye, and D.~M. Mount.
\newblock Analysis of dynamic {V}oronoi diagrams in the {H}ilbert metric.
\newblock {\em arXiv preprint:2304.02745}, 2024.

\bibitem{Ch04}
T.~M. Chan.
\newblock An optimal randomized algorithm for maximum {T}ukey depth.
\newblock In {\em Proceedings of the {F}ifteenth {A}nnual ACM-SIAM Symposium on Discrete algorithms}, volume~4, pages 430--436. SIAM, 2004.

\bibitem{Du74}
R.~M. Dudley.
\newblock Metric entropy of some classes of sets with differentiable boundaries.
\newblock {\em Journal of Approximation Theory}, 10(3):227--236, 1974.

\bibitem{Dy16}
R.~Dyckerhoff.
\newblock Convergence of depths and depth-trimmed regions.
\newblock {\em arXiv preprint arXiv:1611.08721}, 2016.

\bibitem{DyMo16}
R.~Dyckerhoff and P.~Mozharovskyi.
\newblock Exact computation of the halfspace depth.
\newblock {\em Computational Statistics \& Data Analysis}, 98:19--30, 2016.

\bibitem{GeMo23}
A.~H. Gezalyan and D.~M. Mount.
\newblock {{V}oronoi Diagrams in the {H}ilbert Metric}.
\newblock In {\em 39th International Symposium on Computational Geometry (SoCG 2023)}, volume 258 of {\em Leibniz International Proceedings in Informatics (LIPIcs)}, pages 35:1--35:16. Schloss Dagstuhl -- Leibniz-Zentrum f{\"u}r Informatik, 2023.

\bibitem{Ha51}
P.~C. Hammer.
\newblock The centroid of a convex body.
\newblock {\em Proceedings of the American Mathematical Society}, 2(4):522--525, 1951.

\bibitem{LMM19}
X.~Liu, K.~Mosler, and P.~Mozharovskyi.
\newblock Fast computation of {T}ukey depth contours.
\newblock {\em Statistics and Computing}, 29:557--569, 2019.

\bibitem{Ma92}
J.~Matou{\v s}ek.
\newblock Reporting points in halfspaces.
\newblock {\em Computational Geometry}, 2(3):169--186, 1992.

\bibitem{MeRe91}
M.~Meyer and S.~Reisner.
\newblock A geometric property of the boundary of symmetric convex bodies and convexity of flotation surfaces.
\newblock {\em Geometriae Dedicata}, 37(3):327--337, 1991.

\bibitem{MRRSS03}
K.~Miller, S.~Ramaswami, P.~Rousseeuw, T.~Sellares, D.~Souvaine, I.~Streinu, and A.~Struyf.
\newblock Efficient computation of bivariate depth contours.
\newblock {\em Journal of Computational and Graphical Statistics}, 12(2):330--340, 2003.

\bibitem{NSW19}
S.~Nagy, C.~Sch{\"u}tt, and E.~M. Werner.
\newblock Halfspace depth and floating body.
\newblock {\em Statistics Surveys}, 2019.

\bibitem{NiSu19}
F.~Nielsen and K.~Sun.
\newblock Clustering in {H}ilbert's projective geometry: The case studies of the probability simplex and the elliptope of correlation matrices.
\newblock In {\em Geometric Structures of Information}, pages 297--331. Springer International Publishing, 2019.

\bibitem{NiSu23}
F.~Nielsen and K.~Sun.
\newblock Non-linear embeddings in {H}ilbert simplex geometry.
\newblock In {\em Proceedings of 2nd Annual Workshop on Topology, Algebra, and Geometry in Machine Learning (TAG-ML)}, volume 221 of {\em Proceedings of Machine Learning Research}, pages 254--266. PMLR, 2023.

\bibitem{RuRo96}
I.~Ruts and P.~J. Rousseeuw.
\newblock Computing depth contours of bivariate point clouds.
\newblock {\em Computational statistics \& data analysis}, 23(1):153--168, 1996.

\bibitem{ScWe90}
C.~Sch{\"u}tt and E.~M. Werner.
\newblock The convex floating body.
\newblock {\em Mathematica Scandinavica}, 66(2):275--290, 1990.

\bibitem{ScWe94}
C.~Sch{\"u}tt and E.~M. Werner.
\newblock Homothetic floating bodies.
\newblock {\em Geometriae Dedicata}, 49:335--348, 1994.

\bibitem{Th17}
N.~Tholozan.
\newblock Volume entropy of {H}ilbert metrics and length spectrum of {H}itchin representations into {${\rm PSL}(3,\Bbb{R})$}.
\newblock {\em Duke Mathematical Journal}, 166(7):1377--1403, 2017.

\bibitem{Ve13}
C.~Vernicos.
\newblock Asymptotic volume in {H}ilbert geometries.
\newblock {\em Indiana University Mathematics Journal}, 62(5):1431--1441, 2013.

\bibitem{VeWa21}
C.~Vernicos and C.~Walsh.
\newblock Flag-approximability of convex bodies and volume growth of {H}ilbert geometries.
\newblock {\em Annales Scientifiques de l'{\'E}cole Normale Sup{\'e}rieure}, 54:1297--1315, 2021.

\bibitem{We22}
E.~M. Werner.
\newblock Floating bodies and approximation of convex bodies by polytopes.
\newblock {\em Probability Surveys}, 19:113--128, 2022.

\end{thebibliography}
\bibliographystyle{plain}

\section*{Appendix: Deferred Proofs and Remarks}

\begin{proof}[Proof of Lemma~\ref{L:width}]
    This is trivial to see when $d=1$. Say $d\geq 2$. Let $w$ and $z$ denote the width and apex of the cap $C$, respectively. Let $H$ denote the hyperplane passing through the base of $C$ and let $H'$ denote the supporting hyperplane of $K$ at $z$, which is parallel to $H$. We claim that 
    \be\label{E:ray_width}
        \frac{\sqrt{3}\gamma}{d}w\leq \text{ray}(x)<
        \frac{1}{2\gamma}w.
    \ee
    Assuming \eqref{E:ray_width}, we complete the proof of \eqref{E:ray_vol}. Let $v$ denote the point where the ray $oz$ intersects the base of the cap $C$. There are two possibilities. Either $o\notin C$, in which case $v\neq o$, or $o$ lies in the base of $C$, in which case $v=o$. 

    In the former case, $\text{ray}(v)\geq w.$ By Lemma~\ref{L:width2}, we have that $\delta(v)\geq 2\gamma\,\text{ray}(v)\geq 2\gamma w.$ It follows that
    \be\label{E:volUB}
    |C|\geq \frac{1}{2}|\mathbb B(v,\de(v))|=2^{d-1}\omega_d\gamma^d w^d.\ee
    In the latter case, the lower bound on $|C|$ in \eqref{E:volUB} holds because $\mathbb B(o,\gamma)\subset K$ and $w<1/2$.
    
    Next, without loss of generality, let $H'$ be parallel to $\left\{(t_1,...,t_d)\in\mathbb{R}^d : 
    t_d=0\right\}$. Consider the $d$-dimensional cube $P=(-\frac{1}{2},\frac{1}{2})^d$. The cap $C$ is strictly contained in the slab $S$ of $P,$ of height $w,$ bounded by $H$ and $H
    '$. Thus,
    \be\label{E:volLB}
    |C|<|S|=w.
    \ee
    Combining \eqref{E:ray_width}, \eqref{E:volUB}, and \eqref{E:volLB}, we get the result. 

    We now prove \eqref{E:ray_width}. First, assume that $o\notin C$. Let $a$ and $b$ denote the points of intersection of the ray $ox$ with $b K$ and $H'$, respectively. Since $o\notin C$, the points $o, x, a$ and $b$ are in consecutive order. Thus,
    \bes
    \frac{\text{ray}(x)}{\Vert a\Vert}= \frac{\Vert x-a\Vert}{\Vert 
    a\Vert}\leq \frac{\Vert x-a\Vert +\Vert a-b\Vert}{\Vert a\Vert +\Vert a-b\Vert}=\frac{\Vert x-b\Vert}{\Vert b\Vert}=\frac{w}{\text{dist}(o, H')},\ees
    where the last equality follows from similarity of triangles. Since $\mathbb{B}(o, \gamma)\subset K\subset \mathbb{B}(o, 1/2),$ $\Vert a\Vert<1/2$ and $\text{dist}(o, H')\geq \gamma.$ Thus, we obtain the upper bound in \eqref{E:ray_width}.
    
\begin{figure}[h]
\centering
\begin{subfigure}{0.36\textwidth}
\includegraphics[width=\textwidth]{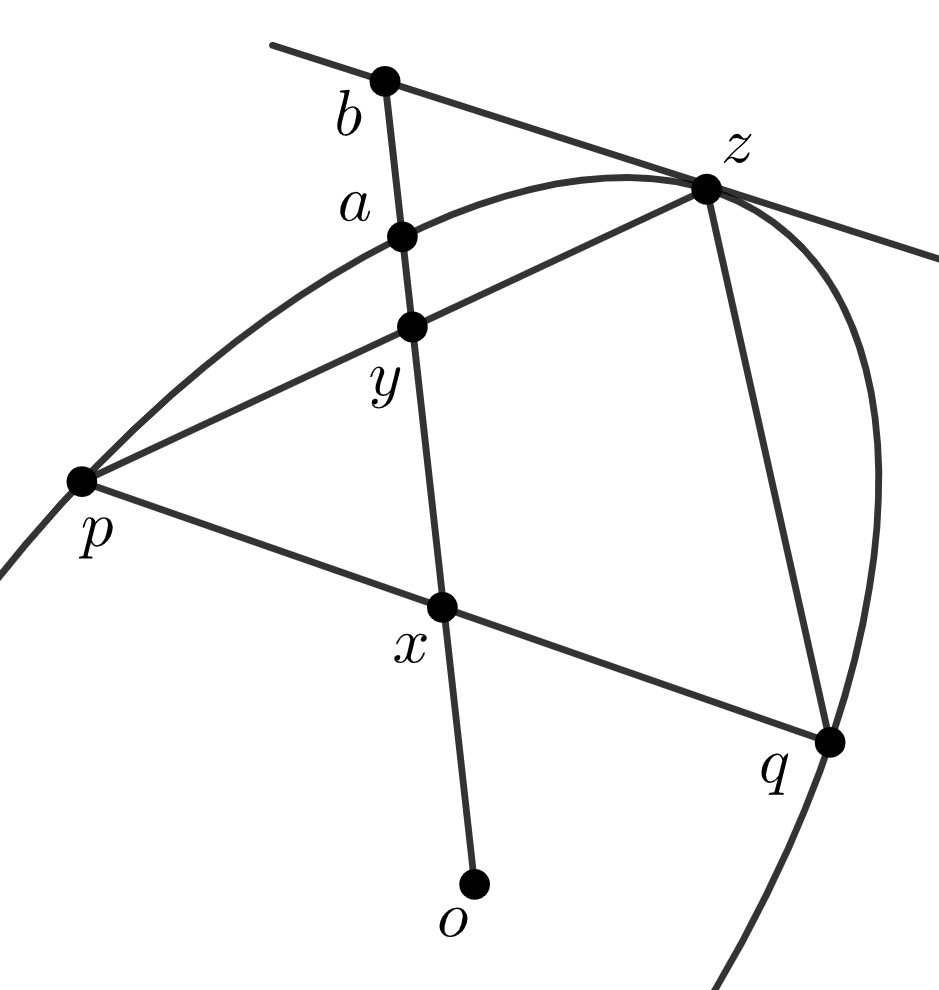}
    \caption{}
    \label{F:UB}
\end{subfigure}\hfill
\begin{subfigure}{0.55\textwidth}
\includegraphics[width=\textwidth]{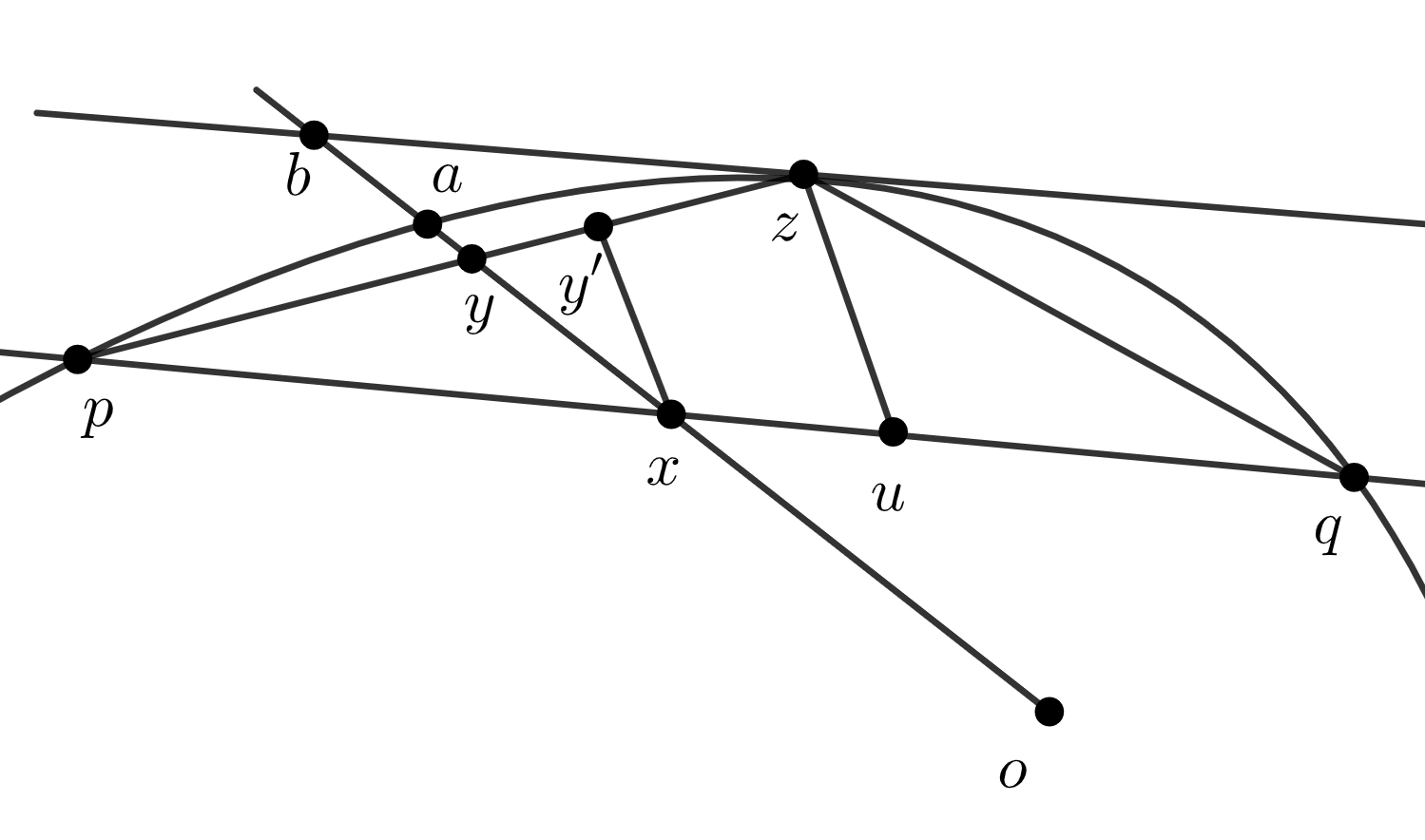}
    \caption{}
    \label{F:LB}
\end{subfigure}
        \caption{Proof of Lemma~\ref{L:width}}
\label{fig:figures}
\end{figure}
 
Next, consider a $2$-plane $\Pi$ containing the points $o,x$ and $z.$ The intersection of $\Pi$ with $K\cap H$ is a line segment which intersects $b K$ at, say $p$ and $q.$ Label them so that the ray $ox$ intersects $pz$ at a point $y$. Note that $x$ is the centroid of $K\cap H$ and lies on the line segment $pq$. By applying \cite[Theorem II]{Ha51} to the $(d-1)$-dimensional convex body $K\cap H$, we obtain that
\be\label{P1_2}\frac{\Vert p-x\Vert}{\Vert p-q\Vert}\geq \frac{1}{d}.\ee
Consider the case where $\angle pzq$ is acute; see Figure $1$(A).
Since $\mathbb{B}(o, \gamma) \subset K\subset \mathbb{B}(o, 1/2)$, $\angle pzq \geq 2\arcsin{\gamma}$. Since $\angle pzq\in (0, \pi/2]$, we have that $\sin \angle pzq \geq \sin(2\arcsin \gamma)$. Since $z$ is the apex of $C$, $\Vert z-q\Vert\geq w$. By \eqref{P1_2} and an application of the sine rule on $\Delta pzq$ and $\Delta pyx$,
\be\label{P1_3}\Vert y-x\Vert=\Vert z-q\Vert\frac{\Vert p-x\Vert}{\Vert p-q\Vert}\frac{\sin \angle pzq}{\sin \angle pyx}\geq \frac{\sin(2\arcsin \gamma)}{d}w=\frac{2\gamma\sqrt{1-\gamma^2}}{d}w\geq \frac{\sqrt{3}\gamma}{d}w.\ee
Now, consider the case where $\angle pzq$ is not acute; see Figure $1$(B).
There is a point $u$ on the segment $pq$ such that $\angle pzu=\pi/2.$ Drop a perpendicular from $x$ on the segment $pz$ to get a point $y'$ such that $\angle py'x=\pi/2.$ By \eqref{P1_2} and the similarity of $\Delta py'x$ and $\Delta pzu$,
\be\label{P1_4}\Vert y-x\Vert\geq \Vert y'-x\Vert=\frac{\Vert p-x\Vert}{\Vert p-u\Vert}\Vert z-u\Vert\geq \frac{\Vert p-x\Vert}{\Vert p-q\Vert}\Vert z-u\Vert\geq \frac{1}{d}w.\ee
By \eqref{P1_3} and \eqref{P1_4}, and the fact that $\text{ray}(x)\geq \Vert y-x\Vert$, we obtain the lower bound in \eqref{E:ray_width} when $o\notin C$.

{Now, assume that $o$ lies in the base of $C$. Then, $\gamma\leq  w< 1/2$. Let $r, s\in bK$ denote the points of intersection of the line joining $o$ and $x$ with $bK,$ so that $r,o,x,$ and $s$ are in consecutive order.  We apply \cite[Theorem II]{Ha51} to the $(d-1)$-dimensional convex body $K \cap H$, whose centroid is $x$, to obtain that
    \bes
        \text{ray}(x)=\|s-x\|\geq\frac{\|s-r\|}{d}=\frac{\|r\|+\|s\|}{d}\geq 
        \frac{2\gamma}{d}>\frac{\sqrt{3}\gamma}{d}w.
    \ees
On the other hand, 
    \bes
        \text{ray}(x)<\frac{1}{2}\leq \frac{1}{2\gamma}w.
    \ees
The proof is now complete.}
    \end{proof}

\noindent{\em Proof of Theorem~\ref{T:FB-HB}.} Once \eqref{E:main_inc} is established, the nonemptiness of the ball in the left-hand side follows immediately by requiring the radius to be positive. To prove \eqref{E:main_inc}, we first consider the case where $K_{\de}$ is a singleton, specifically $K_{\de}=\left\{o\right\}.$ Since any hyperplane through $o$ cuts off a cap of volume at least $\gamma^d\omega_d/2$ from $K$, we have that   \bes
    {\de\,|K|}\geq \frac{\gamma^d\omega_d}{2}\geq 2^{2d-1}\gamma^{3d}\omega_d,
    \ees
and thus, \eqref{E:main_inc} holds trivially.

Now, suppose $K_\de\neq\emptyset$ is not a singleton. Since $K_{\de}$ is strictly convex, it has nonempty interior. Let $o\neq z\in bK_\de$ be arbitrarily fixed. Then, by Lemma~\ref{L:FB}$(b)$, there is a supporting hyperplane of $K_\de$ at $z$ such that the cap of $K$, say $C$, cut off by this hyperplane, has volume $\de\,|K|$. Moreover, $z$ is the centroid of the base of $C.$ Since $o\in K_\de$, $o\neq z$, and $K_\de$ is strictly convex, $o\notin C.$ By Lemma~\ref{L:width}, %
\be\label{E:ray} \eta\,\de\,|K|< \text{ray}(z)< \zeta(\de\,|K|)^{1/d},
\ee%
where $\eta=\dfrac{\sqrt{3}\gamma}{d}$ and $\zeta=\dfrac{1}{4\gamma^2}\left(\dfrac{2}{\omega_d}\right)^{1/d}.$ We claim that \be\label{E:dilate}
        \left(1-\frac{\zeta}{\gamma}(\de\,|K|)^{\, 1/d}\right)K\subset K_\de\subset (1-2\eta\,\de\,|K|)K,
    \ee
where, note that 
\begin{itemize}
\item [(a)]
$2\eta\,\de\,|K|< 2\,\text{ray}(z)<1,$ because $K\subset \mathbb{B}(o, 1/2)$,
\smallskip
\item [(b)] when $o\in bK_\de$, $\dfrac{\zeta}{\gamma}(\de\,|K|)^{\, 1/d}\geq 1$ since the condition $\mathbb B(o,\gamma)\subset K$ yields $\de\,|K|\geq \dfrac{\gamma^d\omega_d}{2},$ and thus, the left-hand side containment in ~\eqref{E:dilate} holds trivially.
\end{itemize}
Assuming \eqref{E:dilate} for the moment, Lemma~\ref{L:DilatesHilb} gives that
\bes B_K\left(o, \frac{1}{2d}\ln \frac{1}{{\de\,|K|}}+p\right)\subset K_\de\subset B_K\left(o, \frac{1}{2}\ln\frac{1}{{\de\,|K|}}+q\right),
\ees
where $p=\dfrac{1}{2}\ln\dfrac{\gamma}{\zeta}=\dfrac{1}{2d}\ln(2^{2d-1}\gamma^{3d}\omega_d)$ and $q= \dfrac{1}{2}\ln\left(\dfrac{2\gamma+1}{4\eta\gamma}\right)= \dfrac{1}{2}\ln \dfrac{d(2\gamma+1)}{4\sqrt{3}\gamma^2},$ thus proving \eqref{E:main_inc}.
\begin{figure}[h]
\centering
\begin{subfigure}{0.5\textwidth}
\begin{overpic}[grid=false,tics=10,scale=0.17]{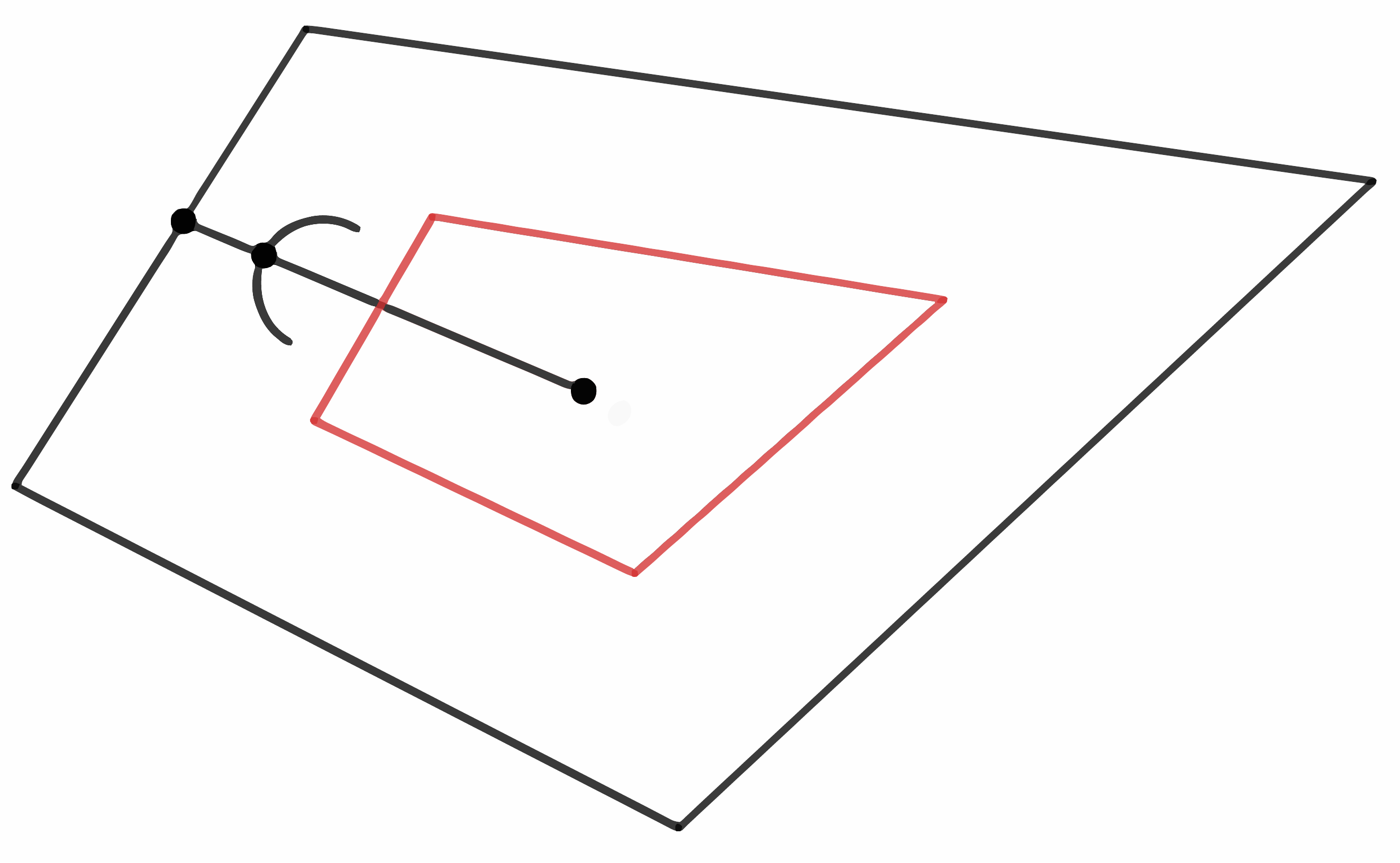}
\put(10,50){$y$}
\put(16,47){$y_{\de}$}
\put(12,32){$bK_{\de}$}
\put(42,30){$o$}
\put(60,44){$(1-t_{\de})K$}
\put(40,10){$K$}

\end{overpic}
    \caption*{(A)}
\end{subfigure}\hfill
\begin{subfigure}{0.5\textwidth}
\begin{overpic}[grid=false,tics=10,scale=0.17]{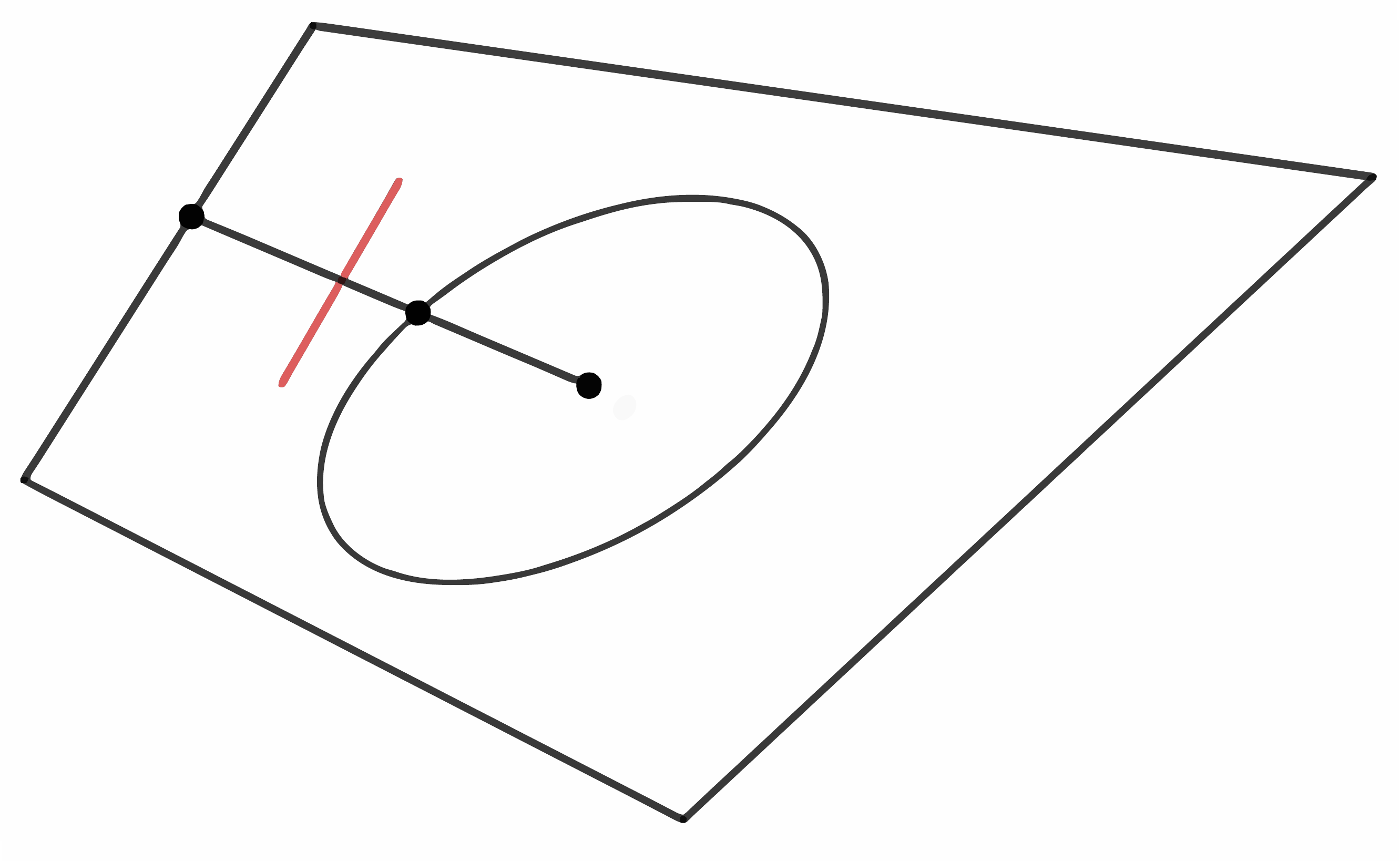}
\put(10,48){$z$}
\put(27,42){$z_{\de}$}
\put(42,30){$o$}
\put(40,11){$K$}
\put(60,44){$K_\de$}
\put(20,50){$(1-s_{\de})\,bK$}
\end{overpic}
    \caption*{(B)}
\end{subfigure}
        \caption{Proof of Theorem~\ref{T:FB-HB}}
\label{F:FB-HB}
\end{figure}

To establish the right-hand side containment in \eqref{E:dilate}, suppose that, for some $t_{\de}\in(0,1)$, we have $K_\de\not\subset (1-t_{\de})K.$ Then, there exists $y_{\de}\in  b  K_\de$ such that $y_{\de}\notin (1-t_{\de})K.$  Since $o\in (1-t_{\de})K$, $y_\de\neq o.$ Let $y\in b  K$ denote the point where the ray from $o$ to $y_{\de}$ intersects $ bK$; see Figure~\ref{F:FB-HB}(A). Then,
\[(1-t_{\de})\|y\|\leq \|y_{\de}\|=\|y\|-\text{ray}(y_{\de}).\]
Since $K\subset \mathbb B(o,1/2),$ by \eqref{E:ray},
\[1-t_{\de}< 1-2\eta\,\de\,|K|\implies t_{\de}>2\eta\,\de\,|K|.\]
This yields the right-hand side containment in \eqref{E:dilate}. 

Next, we establish the left-hand side containment in \eqref{E:dilate}. Suppose that, for some $s_{\de}\in(0,1)$, we have $(1-s_{\de})K\not\subset K_\de.$ Then, there exists $z\in b K$ such that $(1-s_{\de})z\notin K_\de.$ Let $z_{\de}\in  b  K_\de$ denote the point where the ray from $o$ to $z$ intersects $ b  K_\de$; see Figure~\ref{F:FB-HB}(B). It is possible that $z_\de=o$. Then,
\[(1-s_{\de})\|z\|>\|z_{\de}\|=\|z\|-\text{ray}(z_{\de}).\]
Since $\mathbb B(o,\gamma)\subset K,$ by \eqref{E:ray},
\[1-s_{\de}>1-\frac{\zeta}{\gamma}(\de\,|K|)^{1/d}\implies s_{\de}<\frac{\zeta}{\gamma}{\left(\de\,|K|\right)}^{1/d}.\]
This yields the left-hand side containment in \eqref{E:dilate}. The proof of \eqref{E:main_inc} is now complete. \qed

\noindent {\em On the sharpness of Theorem~\ref{T:FB-HB}.} The containments in \eqref{E:dilate} (and, therefore, in \eqref{E:main_inc}) are sharp in the 
order of $\de$. For instance, consider the $d$-dimensional cube $K=\left(-\frac{1}{2},\frac{1}{2}\right)^d$. Since $|K|=1$ and $K=-K$, its $\de$-DTR coincides with its $\de$-convex floating body, which coincides with its (Dupin) $\de$-floating body; see \cite{MeRe91} for the latter statement. Thus, for $\de\in (0, 1/2]$, the centroid of the base of every cap of $K$ of volume $\de$
is an element of $bK_\de$. The hyperplanes 
    \beas
        H_1&=&\left\{(x_1,...,x_d)\in\Rd:x_d=\frac{1}{2}-\de\right\},\\
        H_2&=&\left\{(x_1,...,x_d)\in \Rd: x_1+\cdots+x_d=\frac{d}{2}-(d!\de)^{1/d}\right\},
    \eeas
cut off caps of volume $\de$ from $K$, and have centroids $\left(0,...,0,\dfrac{1}{2}-\de\right)$ and $\left(x(\de),...,x(\de)\right)$ with $x(\de)=\dfrac{1}{2}-\dfrac{(d!)^{1/d}}{d}\de^{1/d}$, respectively. From this, it follows that if $(1-\phi_\de)K\subset K_\de\subset (1-\varphi_\de)K$, then $\varphi_\de$ cannot be larger than $2\de$ and $\phi_\de$ cannot be smaller than $\dfrac{2(d!)^{1/d}}{d}\de^{1/d}.$
\end{document}